\title{Interactive certificate for the verification of {Wiedemann}'s
  {Krylov} sequence: application to the certification of the determinant, the
  minimal and the characteristic polynomials of sparse matrices}
\author{Jean-Guillaume Dumas\thanks{
    Laboratoire J. Kuntzmann,
    Universit\'e de Grenoble. 51, rue des Math\'ematiques, umr CNRS
    5224, bp 53X, F38041 Grenoble, France,
    \href{mailto:Jean-Guillaume.Dumas@imag.fr}{Jean-Guillaume.Dumas@imag.fr},
    \href{http://ljk.imag.fr/membres/Jean-Guillaume.Dumas/}{ljk.imag.fr/membres/Jean-Guillaume.Dumas}.
  }
  \and Erich Kaltofen\thanks{
    Department of Mathematics.
    North Carolina State University.
    Raleigh, NC 27695-8205, USA.
    \href{mailto:kaltofen@math.ncsu.edu}{kaltofen@math.ncsu.edu},
    \href{http://www.kaltofen.us}{www.kaltofen.us}.
  }
  \and Emmanuel Thom\'e\thanks{
    CARAMEL Project -- INRIA Nancy Grand Est.
    615 rue du Jardin Botanique--54602 Villiers-les-Nancy -- France.
    \href{mailto:Emmanuel.Thome@gmail.com}{Emmanuel.Thome@gmail.com},
    \href{http://www.loria.fr/~thome}{www.loria.fr/\~{}thome/}.
  }
}
\newcommand{\Z}{\ensuremath{{\mathbb Z}}}
\newcommand{\F}{\ensuremath{{\mathbb F}}}
\newcommand{\Su}{\ensuremath{{\mathbb S}}}
\newcommand{\Seq}[1]{\ensuremath{W(#1)}}
\newcommand{\bigO}[1]{\ensuremath{{\mathcal O}\left(#1\right)}}
\newcommand{\checks}{\ensuremath{\stackrel{?}{==}}}
\newcommand{\REDUC}[2]{{\sc #1}$\prec${\sc #2}}
\newcommand{\REDUCS}[3]{{\sc #1}$\prec${\sc #2}$\prec${\sc #3}}
\newtheorem{theorem}{Theorem}
\newtheorem{corollary}{Corollary}
\newtheorem{lemma}{Lemma}
\newtheorem{definition}{Definition}
\begin{document}
\maketitle
\begin{abstract}
Certificates to a linear algebra computation are
additional data structures for each output, which can be used
by a---possibly randomized---verification algorithm that proves the
correctness of each output. 
Wiedemann's algorithm projects the Krylov sequence obtained by repeatedly
multiplying a vector by a matrix to obtain a linearly recurrent sequence. 
The minimal polynomial of this sequence divides the minimal polynomial of the
matrix. 
For instance, if the $n\times n$ input matrix is sparse with $n^{1+o(1)}$
non-zero entries, the computation of the sequence is quadratic in the dimension
of the matrix while the computation of the minimal polynomial is $n^{1+o(1)}$,
once that projected Krylov sequence is obtained.

In this paper we give algorithms that compute certificates for the Krylov
sequence of sparse or structured $n\times n$ matrices over an abstract field,
whose Monte Carlo verification complexity can be made essentially linear. 
As an application this gives certificates for the determinant, the minimal and
characteristic polynomials of sparse or structured matrices at the same cost.
\end{abstract}

\section{Introduction}
We consider a square sparse or structured matrix $A\in\F^{n\times{}n}$. 
By sparse or structured we mean that the multiplication of a vector by $A$
requires less operations than that of a dense matrix-vector multiplication. 
The arithmetic cost to apply $A$ is denoted by $\mu$ which thus satisfies
$\mu\leq{}n(2n-1)$ ($n^2$ multiplications and $n(n-1)$ additions). 
In the following we also need to perform row-vector-times-matrix
multiplications, which, by the transposition principle, cost $O(\mu)$
operations~\cite{Bostan:2003:Tellegen}. 
In the following we will simply consider that both operations (left or right
multiplication by a row or column vector) cost less than $\mu$ arithmetic
operations.

The main idea of this paper is to use a Baby-step/Giant-step verification of
Wiedemann's Krylov sequence generation. 
Once the sequence is verified, the remaining operations, of lower cost, can be
replayed by the Verifier.

The verification procedure used throughout this paper is that of {\em essentially
optimal interactive certificates} with the taxonomy
of~\cite{jgd:2014:interactivecert}. 
Indeed, in the following, we consider a {\em Prover}, nicknamed {\em Peggy}, who
will perform a computation, potentially together with additional data
structures. We also consider a {\em Verifier}, nicknamed {\em Victor}, who will
check the validity of the computation, faster that just by recomputing it. 

By {\em certificates} for a problem that is given by input/output
specifications, we mean, as in \cite{KLYZ09,Kaltofen:2011:quadcert}, 
an input-dependent data structure and an algorithm that computes from that input
and its certificate the specified output, and that has lower computational
complexity than any known algorithm that does the same when only receiving the
input. Correctness of the data structure is not assumed but validated by the
algorithm. 

By {\em interactive certificate}, we mean certificates modeled as
$\sum$-protocols (as defined in~\cite{Cramer:1997:Sigma}) were 
the Prover submits 
a {\em Commitment}, that is some result of a computation; 
the Verifier answers by
a {\em Challenge}, usually some uniformly sampled random values;
the Prover then answers with 
a {\em Response}, that the Verifier can use to convince himself of the validity
of the commitment. 
To be useful, such proof systems is said to be {\em complete} if the probability
that a true statement is rejected by the Verifier can be made arbitrarily
small. Similarly, the protocol is {\em sound} if the probability that a false
statement is accepted by the verifier can be made arbitrarily small.
In the following we will actually only consider {\em perfectly complete}
certificates, that is were a true statement is never rejected by the Verifier.

There two may ways to design such certificates.
On the one hand, efficient protocols can be designed for delegating
computational tasks. In recent years, generic protocols have been designed for
circuits with polylogarithmic
depth~\cite{Goldwasser:2008:delegating,Thaler:2013:crypto}.   
The resulting protocols are interactive and their cost for the Verifier is
usually only roughly proportional to the input size. They however can
produce a non negligible overhead for the Prover and are restricted to
certain classes of circuits. 
Variants with an amortized cost for the Verifier can also be designed,
see for instance~\cite{Parno:2013:Pinocchio}, quite often using relatively
costly homomorphic routines.
Moreover, we want the Verifier to run faster than the Prover, so we discard
amortized models where the Verifier is allowed to do a large amount of
precomputations, that can be amortized if, say, the same matrix is repeatedly
used~\cite{Chung:2010:delfhe,Gentry:2014:nizkfhe}.

On the other hand, dedicated certificates (data structures and
algorithms that are verifiable a posteriori, without interaction) have
also been developed in the last few years, e.g., for dense exact linear
algebra~\cite{Freivalds:1979:certif,Kaltofen:2011:quadcert,Fiore:2012:PVD},
even for problems that have no good circuit representation. 
There the certificate constitute a proof of correctness of a result,
not of a computation, and can thus also stand a direct public verification.
The obtained certificates are ad-hoc, but try to reduce as much as possible the
overhead for the Prover, while preserving a fast verification procedure.

In the current paper we follow the later line of research, that is ad-hoc
certificate with fast verification and negligible overhead for the Prover.\\

In exact linear algebra, the most simple problem to have an optimal certificate
is the linear system solution, {\sc LinSolve}: for a matrix $A$ and a vector
$b$, checking that $x$ is actually a solution is done by  one multiplication of
$x$ by $A$. The cost of this check similar to that of just enumerating all the
non-zero coefficients of $A$. Thus certifying a linear system is reduced to
multiplying a matrix by a vector: \REDUC{LinSolve}{MatVecMult}.
In~\cite{jgd:2014:interactivecert}, two essentially optimal reductions have been
made, that the rank can be certified via certificates for linear systems, and
that the characteristic polynomial can be certified via certificates for the
determinant: \REDUC{CharPoly}{Det} and \REDUC{Rank}{LinSolve}.
But no reduction was given for the determinant. 
We bridge this gap in this paper.
We first use Wiedemann's reduction of the determinant
to the minimal polynomial of a sequence,
\REDUCS{Det}{MinPoly}{Sequence},~\cite{Wiedemann:1986:SSLE}, and then
show that the computation of a sequence generated by projections of
matrix-vector iterations can be checked by a small number of matrix-vector
multiplications: \REDUC{Sequence}{MatVecMult}.\\

The complexity model we consider here is the algebraic complexity model: 
we count field operations, but tests (even such as checking the equality of
whole vectors) are free and uniform sampling of random elements in a field is
also free. 
This is justified by the fact that for all our proposed certificates, the number
of equality tests is always lower than that of field operations and that the
number of random samples is always lower than that of the communications, itself
lower than that of the Verifier's work.

The paper is organized as follows. We define Wiedemann's Krylov sequence
formally in Section~\ref{sec:krylov}. Then we use a check-pointing technique to
propose a first non-quadratic certificate in Section~\ref{sec:checkpoint}.
Then we derive from this technique a recursive process that can yield a method
of decreasing complexities for the Verifier in Section~\ref{sec:recursive}. 
The same general idea is modified in Section~\ref{sec:nlogn} to get a certificate verifiable in essentially optimal time. 
Finally, we show in Section~\ref{sec:applis} how to derive certificates for the
determinant, the minimal and the characteristic polynomial from these
certificates for the Krylov sequence.

\section{Wiedemann's Krylov sequence}\label{sec:krylov}
We consider here the simple Wiedemann's sequence $S$ (no blocks), defined for
two given vectors.

\begin{definition} 
  For $A\in\F^{n\times{}n}$, $V_0\in\F^n$ and $U\in\F^n$,
  {\em Wiedemann's Krylov space} is defined for $i\geq 0$ as:
  $$ K_{V_0} = (V_i)_i = (A^i V_0)_i$$
  {\em  Wiedemann's Krylov sequence} is also defined as:
  $$ S=(s[i])_{i} = (U^T A^i V_0)_i = (U^T V_i)_i$$
\end{definition}

In the following, the Prover
will compute this sequence, potentially together with additional data
structures, and the Verifier will
check the validity of the sequence, once computed.

Now, for a matrix $A$ whose matrix-vector multiplication costs $\mu$ arithmetic
operations, the original cost for the computation of $2n$ elements of
Wiedemann's Krylov sequence is trivially:
\[\Seq{n}=2n\mu+4n^2=\bigO{n\mu}.\]

We summarize in table~\ref{tab:recap}, the complexity bounds for 
certificates of Wiedemann's Krylov sequence, presented in this paper .

\newcommand{\PowCertVerif}{\ensuremath{\mu\log_2(n)+6n\log_2^2(n)}}
\newcommand{\PowCertMem}{\ensuremath{2n\log_2^2(n)}}
\newcommand{\PowCertProve}{\ensuremath{7\Seq{n}}}
\newcommand{\LogPowCertVerif}{\ensuremath{\frac{1}{2}\mu\log_2^2(n)+4n\log_2^2(n)}}
\newcommand{\LogPowCertMem}{\ensuremath{\frac{3}{2}n\log_2^2(n)}}
\newcommand{\LogPowCertProve}{\ensuremath{5\Seq{n}}}
\begin{table}[htb]
\[\begin{array}{|c||c|c|c|}
\hline
\multirow{2}{*}{\text{Certificate}} & \multirow{2}{*}{\text{Verifier}} & \text{Extra} & \multirow{2}{*}{\text{Prover}}\\
&& \text{Communication} &\\
\hline
\S~\ref{sec:checkpoint} &\bigO{n\sqrt{\mu}} &\bigO{n\sqrt{\mu}}& \Seq{n}\\
\S~\ref{sec:dense} &2\mu+\bigO{n\sqrt{n}} &\bigO{n\sqrt{n}}& \Seq{n}+\bigO{\mu\sqrt{n}}\\
\S~\ref{sec:twolev} &4\mu+\bigO{n\sqrt[3]{n}} & \bigO{n\sqrt[3]{n}} &
\Seq{n}+\bigO{\mu{}n^{2/3}}\\
\S~\ref{sec:morelev} &2^k\mu+\bigO{n\sqrt[k]{n}} & \bigO{n\sqrt[k]{n}} & \Seq{n}+o(\Seq{n})\\
\S~\ref{sec:nlogn} & \bigO{\mu\log^2(n)} & \bigO{n\log^2(n)}& \LogPowCertProve \\
\S~\ref{sec:nlogn} & \bigO{\mu\log(n)+n\log^2(n)} & \bigO{n\log^2(n)}& \PowCertProve \\
\hline
\end{array}\]
\caption{Summary of the complexity bounds of the certificates presented in this
  paper for Wiedemann's Krylov sequence}\label{tab:recap}
\end{table}

\section{An \texorpdfstring{$n^{1+1/2}$}{n1.5}
  certificate}\label{sec:checkpoint}
\subsection{A four steps Baby-step/Giant-step interactive protocol}
The protocol has four steps: Victor first selects the vectors for the sequence
that are sent to Peggy. Peggy then computes the sequence and keeps some of the
intermediate vectors, called checkpoints. She then sends the sequence to Victor
together with the additional intermediate vectors which Victor will use to
certify the received sequence:
\begin{enumerate}
\item Communications from Victor to Peggy
\begin{enumerate}
\item Uniformly sample $V_0\in\F^n$, $U\in\F^n$;
\item Sends $A$, $U$, $V_0$.
\item Asks for a sequence of $\delta+1$ elements.
\item Asks for a checkpoint every $K< \min\{n,\delta\}$ matrix-vector products.
\end{enumerate}
Communication is $|A|+2n\leq \mu+2n$.
\item Computations of Peggy: 
\begin{enumerate}
\item $V_i = A V_{i-1}$ for $i=0..\delta$;
\item $s[i]=U^T V_i$ for $i=0..\delta$.
\end{enumerate}
Complexity is exactly that of Wiedemann's sequence; 
that is \bigO{n\mu+n^2} if $\delta=2n$.

\item Communications from Peggy to Victor
\begin{enumerate}
\item Sends $W_j=V_{j K} = A^{j K} V_0$ for $j=0..\lceil\frac{\delta}{K}\rceil$;
\item Sends $s[i]$ for $i=0..\delta$.
\end{enumerate}
Communication is $n\lceil\frac{\delta}{K}\rceil+\delta+1=\bigO{\delta\frac{n}{K}}$.

\item Verifications of Victor.
\newcounter{myenumii}\begin{enumerate}
\item Uniformly sample $R=(r[i])\in\F^K$ and $X\in\F^n$, with
$X\neq U$.
\setcounter{myenumii}{\value{enumii}}
\end{enumerate}
Then first compute some baby steps:
\begin{enumerate}\setcounter{enumii}{\themyenumii}
\item\label{computeZ} Compute $Z = X^T A^K$, in $K\mu$ operations;
\item\label{computeT} Compute $T = \sum_{i=0}^{K-1} r[i] U^TA^i $ in $(K-1)\mu+2Kn$ operations.
\setcounter{myenumii}{\theenumi}\end{enumerate}
For each $j=1..\lceil\frac{\delta}{K}\rceil$;
\begin{enumerate}\setcounter{enumii}{\themyenumii}
\item\label{checkZ} $X^T W_j \checks{} Z W_{j-1}$ in $2\times{}2n+n$
  operations; {\hfill// Checks the $W_j$ with $X$}
\item\label{checkT} $\sum_{i=0}^{K-1} r[i] s[jK+i] \checks{} T W_j$; in
  $2K+2n+1$ operations. {\hfill// Checks the $s[i]$ with $R$ once $W_j$ is certified}
\end{enumerate}
\end{enumerate}
The overall complexity of the verification step is bounded by:
\begin{equation}\label{eq:verifcomp}
2K(\mu+n)+\left\lceil\frac{\delta}{K}\right\rceil(2K+6n).
\end{equation}

\begin{lemma}\label{lem:checkZT} The above protocol is perfectly complete.
\end{lemma}
\begin{proof}
\begin{itemize}
\item[\ref{checkZ}:] $X^T W_j = X^T V_{jK}=X^T A^{jK}V_0=X^T A^KA^{(j-1)K}V_0$,
  so that we also have $X^T W_j =X^T A^K
  V_{(j-1)K}= Z^T W_{j-1}$. 
\item [\ref{checkT}:] $r[i] s[jK+i] = r[i] U^T V_{jK+i} = r[i] U^T A^i V_{jK} =
  r[i] U^T A^i W_j$;
\end{itemize}
\end{proof}

\subsection{Optimal Verifier complexity}
\begin{theorem}\label{thm:onepointhalf} 
  Let $A\in\F^{n\times{}n}$ whose matrix-vector product can be computed in less
  than $\mu>n$ arithmetic operations and a vector $V_0\in\F^n$. 
  There exists a
  certificate of size: 
  \[ \frac{1}{\sqrt{3}}\sqrt{\delta n(\mu+n)}\] 
  for the
  $\delta+1$ first elements of Wiedemann's Krylov sequence associated to~$A$
  and~$V_0$. 
  This certificate is verifiable in time: 
  \[ 4\sqrt{3}\sqrt{\delta  n(\mu+n)}.\]
  With $\mu=n^{1+o(1)}$, and $\delta=2n$, this is a Verifier in $n^{1.5+o(1)}$ time
  and communications.
\end{theorem}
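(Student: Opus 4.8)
The plan is to treat the theorem as a pure optimization over the single free parameter $K$, feeding the already-established per-step costs into the two quantities we must bound: the certificate size (the Peggy-to-Victor communication of Step~3) and the Verifier's running time~\eqref{eq:verifcomp}. Perfect completeness is already furnished by Lemma~\ref{lem:checkZT}, so nothing further is needed there; the entire content of the theorem lies in the choice of $K$ and the resulting estimates.

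First I would write down the two cost functions explicitly. From Step~3 the communication is $n\lceil \delta/K\rceil + \delta + 1$, whose dominant term is $n\delta/K$. From~\eqref{eq:verifcomp} the verification cost is $2K(\mu+n) + \lceil \delta/K\rceil(2K+6n)$; expanding the ceiling-free part exposes a baby-step term $2K(\mu+n)$ and a giant-step term dominated by $6\delta n/K$, together with a benign additive contribution $2\delta$ coming from the $2K$ factor. Thus the two $K$-dependent dominant terms are $2K(\mu+n)$, increasing in $K$, and $6\delta n/K$, decreasing in $K$; balancing them is the crux.

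Next I would minimise $2K(\mu+n) + 6\delta n/K$. Setting the derivative to zero gives $K^2 = 3\delta n/(\mu+n)$, i.e. $K = \sqrt{3\delta n/(\mu+n)}$. Substituting this value, each dominant term collapses to $2\sqrt{3}\sqrt{\delta n(\mu+n)}$, so their sum is $4\sqrt{3}\sqrt{\delta n(\mu+n)}$, matching the claimed Verifier time; and the dominant communication term $n\delta/K$ collapses to $\frac{1}{\sqrt{3}}\sqrt{\delta n(\mu+n)}$, matching the claimed certificate size. Specialising to $\mu = n^{1+o(1)}$ and $\delta = 2n$ gives $K = \Theta\!\left(n^{(1-o(1))/2}\right)$, which satisfies the protocol's requirement $K < \min\{n,\delta\}$, and both bounds become $n^{1.5+o(1)}$.

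The main obstacle I anticipate is bookkeeping rather than conceptual: $K$ must be an integer satisfying $K < \min\{n,\delta\}$, so I would round $\sqrt{3\delta n/(\mu+n)}$ to a nearby integer and verify that this rounding, together with the ceiling in $\lceil\delta/K\rceil$ and the lower-order additive terms $2\delta$, $\delta+1$, and $(K-1)\mu$ versus $K\mu$, perturbs only the lower-order part and leaves the leading constants $4\sqrt{3}$ and $1/\sqrt{3}$ intact. The hypothesis $\mu>n$ is precisely what keeps $K$ safely below $n$ in the target regime and lets me absorb $n$ against $\mu$ wherever convenient.
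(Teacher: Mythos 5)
Your proposal is correct and follows essentially the same route as the paper: minimize the verification cost~\eqref{eq:verifcomp} over $K$, obtaining $K \approx \sqrt{3}\sqrt{\delta n/(\mu+n)}$, and substitute back to get the stated constants for both the Verifier time and the certificate size. You are in fact somewhat more careful than the paper's own two-line proof, since you explicitly balance the dominant terms, verify $K<\min\{n,\delta\}$ in the target regime, and flag the integer-rounding and lower-order bookkeeping.
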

\begin{proof}
The optimal value of $K$ minimizes Equation~(\ref{eq:verifcomp}) and is
therefore close to: \[ K \approx \sqrt{3}\sqrt{\frac{n\delta}{\mu+n}} \]
Substituting the latter into Equation~(\ref{eq:verifcomp}) gives the announced
time complexity. For the size of the certificate, apart from the matrix $A$
itself, the additional communications are the initial vectors sent by Victor and
the intermediate check-pointing vectors sent by Peggy. Once again substituting
the value for $K$ gives the announced complexity.
\end{proof}

We ran this choice on a very sparse matrix with $3$ non zero elements per row.
Results are shown in Table~\ref{tab:veriftime}: computing the sequence took two
hours, the thousand $W$ checkpoints required about two giga bytes of data,
and were checked in a little more than half a minute.
\begin{table}[htb]\center
\begin{tabular}{|c|c|cccc|}
\hline
\multirow{2}{*}{Prover} & \multirow{2}{*}{Communications} &
\multicolumn{4}{|c|}{Verifier} \\ 
& & Z & Check Z  & T & Check T\\
\hline
1.8 hours & 1.9 GB & 5.6 s & 14.5 s & 7.0 s & 6.0 s\\
\hline
\end{tabular}
\caption{Verification for a matrix with $m=n=253008$, $759022$ non-zeroes and
  of compressed size of $3.8$MB. This is $506046$ iterations, and $K=503$ was chosen
  on one core of an i5-4690
  \href{}{@3.50GHz}}\label{tab:veriftime}
\end{table}

\subsection{Soundness}
For the soundness, we need to sample from a finite subset $\Su$ of $\F$.
\begin{theorem}\label{thm:soundness} If the Verifier samples $R$ and $X$ uniformly and independently
  from a finite subset $\Su\subseteq\F$, then the Verifier mistakenly misses any error
  in the sequence or in the check-pointing vectors with probability $\le
  1/|\Su|$. 
\end{theorem}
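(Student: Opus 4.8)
The plan is to show that, whatever data Peggy commits, the two families of tests force any discrepancy into a single nonzero linear form whose random evaluation vanishes with probability at most $1/|\Su|$. Write $\widetilde W_j$ for Peggy's committed checkpoint vectors and $\widetilde s[m]$ for her committed sequence values, and let $W_j=A^{jK}V_0$, $s[m]=U^TA^mV_0$ denote the true quantities. The structural point I would exploit is that in a $\sum$-protocol all committed values are fixed \emph{before} the challenge $(R,X)$ is drawn; hence the two possibilities ``some checkpoint is wrong'' and ``all checkpoints are correct but some sequence value is wrong'' are determined by the commitment alone and are mutually exclusive. Consequently I only need a $1/|\Su|$ bound \emph{within each case}, not a sum over both, which is exactly what yields $1/|\Su|$ rather than $2/|\Su|$.

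First I would handle checkpoint errors, which are caught by Check~\ref{checkZ} and involve only $X$. Since $Z=X^TA^K$, Check~\ref{checkZ} at index $j$ is exactly $X^T E_j=0$ with $E_j=\widetilde W_j-A^K\widetilde W_{j-1}$. Assume some checkpoint is wrong and let $j^*$ be the least index with $\widetilde W_{j^*}\ne W_{j^*}$; because $\widetilde W_0=V_0$ is the vector Victor himself sent we have $j^*\ge 1$, and minimality gives $\widetilde W_{j^*-1}=W_{j^*-1}=A^{(j^*-1)K}V_0$, so $E_{j^*}=\widetilde W_{j^*}-A^{j^*K}V_0=\widetilde W_{j^*}-W_{j^*}\ne 0$. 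As $E_{j^*}\in\F^n$ is a fixed nonzero vector, $X\mapsto X^TE_{j^*}$ is a nonzero linear form, which vanishes on at most a $1/|\Su|$ fraction of $X\in\Su^n$ (the degree-one Schwartz--Zippel bound: for any fixing of the remaining coordinates there is at most one bad value of the coordinate carrying a nonzero coefficient). Hence Check~\ref{checkZ} at $j^*$ fails, and Victor rejects, with probability at least $1-1/|\Su|$.

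Next I would treat the complementary case: all checkpoints are correct, $\widetilde W_j=W_j$, but some $\widetilde s[m]\ne s[m]$; here I use Check~\ref{checkT}, which involves only $R$. Since $T=\sum_{i=0}^{K-1}r[i]U^TA^i$ and the $\widetilde W_j$ are now the true $W_j$, the right-hand side telescopes exactly as in the completeness argument (Lemma~\ref{lem:checkZT}) to $TW_j=\sum_{i=0}^{K-1}r[i]U^TA^{jK+i}V_0=\sum_{i=0}^{K-1}r[i]s[jK+i]$. Thus Check~\ref{checkT} at block $j$ reads $\sum_{i=0}^{K-1}r[i]\bigl(\widetilde s[jK+i]-s[jK+i]\bigr)=0$, a linear form in $R$ whose coefficients are precisely the sequence errors of that block. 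Each committed element lies in a single such block (the initial block being verified against the known $W_0=V_0$), so the erroneous $\widetilde s[m]$ makes the corresponding form nonzero, and it vanishes for at most a $1/|\Su|$ fraction of $R\in\Su^K$; Victor rejects with probability at least $1-1/|\Su|$.

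Finally I would assemble the two cases: the fixed commitment determines which case an error falls into, and since $X$ and $R$ are sampled independently the probability that Victor accepts despite an error is the maximum of the two conditional miss-probabilities, each at most $1/|\Su|$, giving the claimed bound. I expect the delicate part to be the bookkeeping rather than any single estimate: one must verify that the two error modes are genuinely disjoint events of the fixed commitment so that the bounds are combined by a maximum and not a sum, and that every committed $\widetilde s[m]$ truly participates in one Check~\ref{checkT} relation (the boundary block against $W_0=V_0$). The other point requiring precision is the reduction of an arbitrary error pattern to a single nonzero linear form via the smallest-bad-index argument, which is what makes the degree-one vanishing bound directly applicable.
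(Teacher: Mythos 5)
Your proof is correct and follows essentially the same route as the paper's: first catch the earliest wrong checkpoint via the nonzero difference $W_{j}-A^KW_{j-1}$ tested against the random $X$, then, conditioned on correct checkpoints, catch a wrong sequence block via the nonzero error vector tested against the random $R$, each vanishing with probability at most $1/|\Su|$. Your added bookkeeping (the minimal-bad-index reduction and the observation that the two error modes are disjoint so the bounds combine by a maximum rather than a sum) only makes explicit what the paper's terser inductive argument leaves implicit.
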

\begin{proof}
\begin{enumerate}
\item $W_0=V_0$ is given. 
  Thus, inductively, Peggy must find $W_j$ for each $j\geq 1$ such that
  $M_j=W_j-A^KW_{j-1}$ satisfies $X^T M_j =0$, for a random
  secret~$X$ unknown to her. If $M_j$ is non zero then there is $1/|\Su|$
  chances that the dot-product is zero. 
\item Afterwards, 
  let $\Theta_j$ be the vector of $\Theta_j[i] = U^T A^i W_j = U^T A^{jK+i} V_0$.
  $W_j$ being correct, Peggy must find a vector $\Delta_j$ with 
  $\Delta_j[i] = s[jK+i]$ such that $N_j=\Delta_j-\Theta_j$ satisfies 
  $R^T N_j=0$, for a random secret~$R$ unknown to her. If $N_j$ is non
  zero there is $1/|\Su|$ chances that the dot-product is zero.
\end{enumerate}
\end{proof}
To improve probability, as usual, it is possible to rerun the protocol with some
other vectors $X$ and $R$, $\ldots$

\subsection{Public verifiability}
The protocol is publicly verifiable. Indeed, no response from the Prover is
requested after the selection of the challenge $X$ and $R$. 
Therefore, any external participant can also generate its own $X$ and $R$ 
and re-check the Krylov space vectors and Wiedemann's sequence, at the cost
given in Theorem~\ref{thm:onepointhalf}.

\subsection{Constants for block Wiedemann's algorithm}\label{sec:blockwied}
It is possible to use the same protocol to check the matrix sequence produced in
the block Wiedemann's algorithm~\cite{Coppersmith:1994:SHL} with a projection of
$s_1$ vectors on the left and $s_2$ vectors on the right.
The following modifications have to be made, mainly replacing some vectors by
blocks of vectors:
\begin{itemize}
\item $U\in\F^{n\times{}s_1}$, $V_i\in\F^{n\times s_2}$, $W_j\in\F^{n\times s_2}$,
  $S[i]\in\F^{s_1\times{}s_2}$;
\item $X$ and $Z$ remain in $\F^n$ while $R\in\F^{K\times{}s_1}$
  and $r[i]$ is in fact the transpose of a vector in $\F^{s_1}$;
\item and $T\in\F^{s_1\times{}n}$.
\end{itemize}
The length of the sequence is now
$\ell=\frac{n}{s_1}+\frac{n}{s_2}+\bigO{1}$~\cite{Kaltofen:1995:ACB,Villard:1997:further}.
\begin{enumerate}
\item Communications become: 
$\lceil\frac{\ell}{K}\rceil(ns_2)+\ell s_1 s_2$.
\item Verifications become:
$(K\mu)+K(s_1\mu+2 s_1 n+n)+\lceil\frac{\ell}{K}\rceil(4ns_2+K(2s_1s_2+s_2)+2ns_1s_2)$
\end{enumerate}
Now the optimal $K$ becomes:
\[ K \approx
\sqrt{1+\frac{2}{s_1}}\sqrt{\frac{s_2}{s_1}}\sqrt{\frac{\ell{}n}{\mu{}(\frac{1}{2}+\frac{1}{2s_1})+n}}
\]
As $(\frac{1}{2}+\frac{1}{2s_1})\leq 1$, this is a Verifier in time bounded by:
\[
2\sqrt{s_2}\sqrt{s_1+2}(s_1+1)\sqrt{\ell{}n(\mu+n)}+2\ell{}s_1s_2
\]
With $\mu=n^{1+o(1)}$ and $s_1=s_2=s$, the length of the sequence is
$\ell\approx{}2\frac{n}{s}$ so that the Verifier time becomes $(sn)^{1+1/2+o(1)}$.


\section{Recursive verification}\label{sec:recursive}
In fact, in the verification steps of Victor, in the protocol of
Section~\ref{sec:checkpoint}, it is possible to also delegate the computation of $Z$
and $T$. 
\subsection{Denser matrices, Verifier in time
\texorpdfstring{$2\mu + n^{1+1/2+o(1)}$}{2mu+n1.5}}\label{sec:dense}

Next, we propose to delegate just the matrix-vector operations, so that we get a
good complexity also on matrices with more than $n^{1+o(1)}$ entries. 
The idea is that the Verifier can delegate his computations of several
successive matrix vector product and check the whole list of computed vectors. 
Therefore he replaces matrix-vector products by checks of validity of
vectors. The trick is that verifying a vector can be done with a single
dot-product of cost $2n$, while multiplying a matrix by a vector costs $\mu$ 
operations. 

This way, correctness of a full Krylov space can be checked as given in
Algorithm~\ref{alg:kspace}. 

\begin{algorithm}[htb]
\caption{Checking the Krylov Space}\label{alg:kspace}
\begin{algorithmic}[1]
\REQUIRE a matrix $A\in\F^{n\times{}n}$ and a vector $V_0\in\F^n$;
\REQUIRE a list of $d$ vectors $[V_0,V_1,\ldots,V_{d-1}]$;
\ENSURE  $[V_0,V_1,\ldots,V_{d-1}]=[V_0,AV_0,\ldots,A^{d-1}V_0]$.
\STATE For $\Su\subseteq\F$, uniformly sample $Y\in\Su^n$;
\STATE Compute $H=Y^T A$;
\RETURN $H V_{i-1} \checks Y^T V_i $, for $i=1..d$.
\end{algorithmic}
\end{algorithm}

\begin{lemma} Algorithm~\ref{alg:kspace} is sound, perfectly complete and
  requires $\mu+4dn$ arithmetic operations.
\end{lemma}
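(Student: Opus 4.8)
The plan is to establish three separate claims about Algorithm~\ref{alg:kspace}: \emph{perfect completeness}, \emph{soundness}, and the \emph{arithmetic cost} of $\mu+4dn$. The completeness and cost claims are routine and I would dispatch them first; the soundness argument is where the probabilistic content lives, and I expect it to be the only step requiring genuine care.

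First, for \textbf{perfect completeness}, I would simply observe that if the input list is the genuine Krylov space, $V_i = A^i V_0 = A V_{i-1}$ for each $i=1..d-1$, so that $H V_{i-1} = Y^T A V_{i-1} = Y^T V_i$ holds identically regardless of the sampled $Y$. Thus a correct list is never rejected, establishing perfect completeness. For the \textbf{cost}, I would account for the two operations that consume field operations: computing $H=Y^T A$ is a row-vector-times-matrix multiplication costing at most $\mu$ operations (by the transposition principle invoked in the introduction), and then each of the $d$ checks $H V_{i-1} \checks Y^T V_i$ for $i=1..d$ requires two dot-products, each costing $2n$ operations, for a total of $4dn$. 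Note that the equality tests themselves are free in the algebraic complexity model adopted in the paper. Summing gives $\mu + 4dn$ as claimed.

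The \textbf{soundness} step is the main obstacle and the heart of the argument. Here I would argue as in the proof of Theorem~\ref{thm:soundness}. Suppose the submitted list is \emph{incorrect}, so that for some smallest index $i$ we have $V_i \neq A V_{i-1}$ while the preceding entries are consistent; set $M = V_i - A V_{i-1} \neq \vec{0}$. The check at index $i$ passes precisely when $Y^T V_i = Y^T A V_{i-1}$, i.e.\ when $Y^T M = 0$. Since $M$ is a fixed nonzero vector determined before $Y$ is sampled, the set of $Y$ with $Y^T M = 0$ is a proper linear subspace (a single nontrivial linear equation), and so by the standard Schwartz--Zippel/DeMillo--Lipton bound the probability that a uniformly sampled $Y\in\Su^n$ lands in it is at most $1/|\Su|$. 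Hence an incorrect list escapes detection at this index with probability $\le 1/|\Su|$, which can be made arbitrarily small by enlarging $\Su$ (or by repeating with fresh $Y$). This establishes soundness.

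The one subtlety I would be careful about is that the secret vector $Y$ is sampled \emph{after} the list is fixed, so the adversary cannot adapt $M$ to the challenge; this is exactly what makes the single linear equation $Y^T M = 0$ cut out a low-probability event. I would also note that it suffices to bound the failure at the first erroneous index, since a union bound over all $d$ checks is unnecessary once we fix the first discrepancy—a single violated relation already forces $Y^T M = 0$. Combining completeness, soundness, and the cost computation then yields the lemma.
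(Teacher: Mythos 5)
Your proposal is correct and follows essentially the same route as the paper's proof: completeness via $HV_{i-1}=Y^TAV_{i-1}=Y^TV_i$ for a genuine Krylov list, soundness by bounding the probability that a uniformly sampled $Y\in\Su^n$ annihilates the nonzero discrepancy vector $AV_{i-1}-V_i$ by $1/|\Su|$, and the cost split as $\mu$ for $H=Y^TA$ plus two length-$n$ dot-products for each of the $d$ checks. The extra care you take (isolating the first erroneous index, noting that $Y$ is sampled after the list is fixed) only makes explicit what the paper's terser argument leaves implicit.
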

\begin{proof}
Perfect completeness is granted inductively because $V_0$ is known and then
since $H V_{i-1} = Y^T A A^{i-1}V_0 = Y^TA^i V_0=Y^TV_i$.
Soundness is granted because whenever $AV_{i-1}-Vi\neq 0$, its dot-product with
a uniformly selected $Y\in\Su^n$ will be zero only with probability
$|\Su|^{-1}$. 
Complexity for the Verifier is $\mu$ operations to compute $H$ and then $d$
checks performed by two dot-products of size $n$.
\end{proof}

The idea is to delegate the computation of both $Z$ (Point~\ref{computeZ} of the
protocol of section~\ref{sec:checkpoint}) and $T$ (Point~\ref{computeT} of the
protocol of section~\ref{sec:checkpoint}). Then to only check both resulting
Krylov spaces.
Note that it is mandatory that this delegation of the computation of $Z$ and 
$T$ takes place {\em after} the commitment of the $W_j$ and the $s[i]$ by the
Prover.

In the complexity of Theorem~\ref{thm:onepointhalf}, this
replaces  two $K\mu$ factors (now an additional, but neglectible, cost to the
Prover), each by a $\mu+4Kn$ factor. This gives a new complexity of
$2\mu+10Kn+\left\lceil\frac{\delta}{K}\right\rceil(2K+6n)$ for the Verifier.
There are some extra communications, the vectors used for the computation of $Z$
and $T$, namely $2n(K-1)$ field elements.
We have proven:
\begin{corollary}\label{cor:dense}
  Let $A\in\F^{n\times{}n}$ whose matrix-vector product can be computed in less
  than $\mu>n$ arithmetic operations and a vector $V_0\in\F^n$. 
For any $1\leq K\leq min\{n,\delta\}$, there exists a sound and perfectly complete
protocol verifying the first $\delta+1$ elements of Wiedemann's Krylov sequence
associated to $A$ and~$V_0$, in time
$2\mu+10Kn+\left\lceil\frac{\delta}{K}\right\rceil(2K+6n)$. The associated
certificate has size $n\left\lceil\frac{\delta}{K}\right\rceil+2nK$.
\end{corollary}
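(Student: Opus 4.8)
The plan is to take the four-step baby-step/giant-step protocol of Section~\ref{sec:checkpoint}, whose perfect completeness and soundness are already established in Lemma~\ref{lem:checkZT} and Theorem~\ref{thm:soundness}, and to replace only the two iterated matrix-power computations that Victor performs himself---namely $Z=X^TA^K$ (Point~\ref{computeZ}) and the iterated left-products $U^TA^i$ feeding $T=\sum_{i=0}^{K-1}r[i]U^TA^i$ (Point~\ref{computeT})---by delegated computations that Victor merely verifies. Concretely, \emph{after} Victor reveals the challenges $X$ and $R$, Peggy returns the two left-Krylov sequences $(X^T,X^TA,\dots,X^TA^K)$ and $(U^T,U^TA,\dots,U^TA^{K-1})$, each of which Victor checks with Algorithm~\ref{alg:kspace} applied in its transposed, row-vector form. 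Once a sequence passes, Victor reads off $Z$ as its last vector and assembles $T$ from the certified row-vectors via the linear combination with the coefficients $r[i]$.

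First I would argue perfect completeness: it is inherited from Lemma~\ref{lem:checkZT} for the final identities of Points~\ref{checkZ} and~\ref{checkT}, together with the perfect completeness of Algorithm~\ref{alg:kspace} on the two delegated sequences. Next I would bound the cost. The original verification Equation~(\ref{eq:verifcomp}) contains the two $K\mu$ matrix-power contributions inside $2K(\mu+n)=2K\mu+2Kn$; each $K\mu$ is now replaced by one run of Algorithm~\ref{alg:kspace}, costing $\mu+4Kn$, while the $2Kn$ needed to form the linear combination $T$ is retained. Hence $2K\mu$ becomes $2\mu+8Kn$, and the total verification cost is $2\mu+10Kn+\lceil\delta/K\rceil(2K+6n)$, exactly as claimed. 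For the certificate size, the $\lceil\delta/K\rceil+1$ checkpoint vectors $W_j$ contribute $n\lceil\delta/K\rceil$ field elements as in Theorem~\ref{thm:onepointhalf}, and the two delegated left-Krylov sequences add $2n(K-1)\le 2nK$ further elements, giving the stated size $n\lceil\delta/K\rceil+2nK$.

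The main obstacle, and the step that needs genuine care rather than bookkeeping, is soundness. The delegation is safe only because of its ordering: Peggy commits to all the $W_j$ and the $s[i]$ before the challenges $X$ and $R$ are drawn, precisely as emphasized in the remark preceding the corollary. I would then argue by a union bound over the bad events. By soundness of Algorithm~\ref{alg:kspace}, except with probability $|\Su|^{-1}$ each the two delegated sequences are genuine; since their first vectors $X^T$ and $U^T$ are fixed by the challenge and the public input, passing the algorithm forces the last vectors to be the correct powers, so the values Victor extracts satisfy $Z=X^TA^K$ and $T=\sum_i r[i]U^TA^i$ with high probability. The crucial point to check is that using delegated rather than locally computed $Z$ and $T$ grants Peggy no extra freedom downstream: conditioned on $Z$ and $T$ being correct, the test $X^TW_j\checks ZW_{j-1}$ becomes $X^T(W_j-A^KW_{j-1})=0$ with $W_j,W_{j-1}$ committed before the random $X$, so it reduces verbatim to the argument of Theorem~\ref{thm:soundness}, and likewise for the $R$-test on the $s[i]$. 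No adaptively chosen delegated data can make a wrong $Z$ or $T$ pass Algorithm~\ref{alg:kspace} while also fooling these checks. Summing the four failure probabilities bounds the overall soundness error by $4|\Su|^{-1}$, which can be made arbitrarily small by enlarging $\Su$ or by repeating the protocol.
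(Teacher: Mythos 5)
Your proposal is correct and follows essentially the same route as the paper: delegate the computations of $Z$ and $T$ after the commitment of the $W_j$ and $s[i]$, verify the two resulting (left) Krylov sequences with Algorithm~\ref{alg:kspace}, and account for each $K\mu$ term becoming $\mu+4Kn$ while the $2n(K-1)$ delegated vectors enter the certificate size. Your more explicit union-bound treatment of soundness is a welcome elaboration of what the paper only asserts, but it is the same argument in substance.
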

The extra work for the Prover is that of the computation of $Z$ and $T$, both
bounded by $\bigO{\mu K}=\bigO{\mu n^{2/3}}$, negligible with respect to the
computation of the sequence, $\bigO{\mu n}$. 

In terms of computational time for the verifier, the associated optimal $K$
factor becomes $K=\sqrt{\frac{3}{5}}\sqrt{\delta}$ and the Verifier complexity
is transformed into: \[4n+2\mu+4n\sqrt{15\delta}.\] 
With $\delta=2n$, this gives a Verifier complexity bounded by
$2\mu+21.91n^{1.5}$, with a certificate of size bounded by $4.02n^{1.5}$.

\subsection{Optimal 2-levels of recursion and an \texorpdfstring{$n^{1+1/3}$}{n1.333} certificate for Wiedemann's algorithm}\label{sec:twolev}
Now, instead of just delegating the matrix-vector products, we
delegate the whole computation of $Z$ and~$T$:
\begin{enumerate}
\item For $Z$, it is actually sufficient to reuse the scheme of
  Section~\ref{sec:dense} with $\delta=K$, choosing a $K_2<K$, and $Z$ will be
  certified as the last $W_j$ vector.
  The time for the Verifier for this step is thus bounded by
  $2\mu+10nK_2+\frac{K}{K_2}(2K_2+6n)$. 
\item For $T$, the protocol is twofold:
\begin{enumerate}
\item Send the $r[i]$, $U$ and $A$, and ask just for $T$ in return;
\item Only now, send a uniformly sampled vector $\Psi$ and ask for a certificate of the
  sequence $\Gamma=\gamma[i]=U^TA^i\Psi$;
\item Then one can check that $\sum r[i] \gamma[i] \checks{} T \Psi$.
\end{enumerate}
\end{enumerate}

\begin{theorem}\label{thm:dense}
  For $A\in\F^{n\times{}n}$ whose matrix-vector product can be computed in less
  than $\mu>n$ arithmetic operations and a vector $V_0\in\F^n$,
  there exists a sound and perfectly complete interactive
  certificate for the associated Wiedemann's Krylov sequence of size
  $\bigO{n^{1+1/3}}$. 
  This certificate is verifiable in time
  \[ 4\mu+\bigO{n^{1+1/3}}. \]
\end{theorem}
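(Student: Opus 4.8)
The plan is to build on the four-step protocol of Section~\ref{sec:checkpoint}, keeping the outer checkpoint spacing $K$ and the $\lceil\delta/K\rceil$ giant-step blocks, but to remove \emph{every} direct application of $A$ by the Verifier. In that protocol the only $\mu$-cost operations are the baby steps $Z=X^TA^K$ (Point~\ref{computeZ}) and $T=\sum_{i=0}^{K-1}r[i]U^TA^i$ (Point~\ref{computeT}); the per-block checks~\ref{checkZ} and~\ref{checkT} each cost only $O(n)$, so the outer loop contributes merely $O(n\delta/K)$ matrix-free work. I would therefore delegate both $Z$ and $T$ and certify each by a \emph{second} invocation of the dense protocol of Section~\ref{sec:dense} with an inner spacing $K_2<K$. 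For $Z$ this is immediate: run the scheme of Corollary~\ref{cor:dense} on a length-$K$ (left) Krylov iteration and read off $Z$ as the final checkpoint. For $T$ I would use the twofold protocol sketched before the statement: Peggy first commits $T$ from the public data $r[i],U,A$; only then does Victor reveal a fresh uniform $\Psi$ and ask for a certificate of the length-$K$ sequence $\Gamma=(\gamma[i])=(U^TA^i\Psi)$, again via the dense scheme; finally Victor checks the scalar identity $\sum r[i]\gamma[i]\checks{} T\Psi$ (a $1\times n$ by $n\times 1$ product).

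For soundness the decisive point is the \emph{ordering} of the interaction, and this is where the argument is most delicate. The $W_j$ and $s[i]$ are committed before the challenges $X$ and $R$ are drawn, so Theorem~\ref{thm:soundness} already bounds by $1/|\Su|$ the probability that a corrupted $W_j$ or $s[i]$ survives checks~\ref{checkZ}--\ref{checkT}---\emph{provided} the $Z$ and $T$ used in those checks are the genuine values. The inner certifications supply exactly this guarantee: each draws its own independent randomness \emph{after} $Z$, respectively $T$, has been committed, so a cheating Peggy cannot correlate a faked baby step with her earlier faked commitments. For $T$ in particular, a wrong committed $\widetilde T$ passes the final test only if $(T-\widetilde T)\Psi=0$, which happens with probability $\le 1/|\Su|$ over the fresh $\Psi$, while the correctness of $\Gamma$ is itself certified by the dense sub-protocol. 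A union bound over the constantly many sub-protocols keeps the total soundness error at $O(1/|\Su|)$, and it can be driven down by enlarging $\Su$ or by repetition, exactly as after Theorem~\ref{thm:soundness}. Perfect completeness is inherited from Lemma~\ref{lem:checkZT} together with the perfect completeness of the dense scheme.

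It remains to count and balance. By Corollary~\ref{cor:dense} each of the two inner certifications contributes a single $2\mu$ term, for $4\mu$ in total, and all the remaining Verifier work is matrix-free. Explicitly, the $Z$ step costs $2\mu+10nK_2+\frac{K}{K_2}(2K_2+6n)$ and the $T$ step costs the same up to its scalar check, while the outer loop costs $O(n\delta/K)$. Choosing $K_2\approx\sqrt{K}$ collapses each inner overhead to $O(n\sqrt K)$, giving a Verifier running time of $4\mu+O\!\left(n\delta/K+n\sqrt K\right)$ and a certificate consisting of $O(n\delta/K)$ outer checkpoints together with the two inner certificates. With $\delta=O(n)$ the two leftover terms are $n^2/K$ and $n\sqrt K$; balancing them gives $K=n^{2/3}$, whence both equal $n^{4/3}$. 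This yields the claimed Verifier time $4\mu+\bigO{n^{1+1/3}}$ and, by the same substitution, a certificate of size $\bigO{n^{1+1/3}}$.

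The main obstacle, as indicated above, is not the complexity bookkeeping---which is routine once $K_2\approx\sqrt K$ and $K\approx n^{2/3}$ are fixed---but the soundness of the \emph{composed} interactive protocol. One must verify that nesting two rounds of baby-step/giant-step delegation introduces no adaptive loophole: the freshness and independence of the randomness at each level, and the requirement that $Z$ and $T$ be delegated strictly after the $W_j$ and $s[i]$ are committed, are precisely what prevent a Prover from tuning a forged baby step to a forged commitment. Making that chain of conditional soundness bounds precise, and collecting them into a single union bound, is the step I expect to require the most care.
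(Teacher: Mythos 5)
Your proposal matches the paper's own proof essentially step for step: you delegate $Z$ as the final checkpoint of an inner run of the Section~\ref{sec:dense} scheme with spacing $K_2$, certify $T$ by the same commit-first/reveal-$\Psi$-later protocol with the auxiliary sequence $\Gamma$ and the final check $\sum r[i]\gamma[i]\checks{}T\Psi$, argue soundness from the ordering of commitments and the $1/|\Su|$ dot-product bound, and balance to $K=n^{2/3}$, $K_2=n^{1/3}$ exactly as the paper does. The argument and the resulting bounds $4\mu+\bigO{n^{1+1/3}}$ are correct.
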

\begin{proof}
We still use the protocol of Section~\ref{sec:checkpoint}, but replace the
computation of $Z$ and $T$ by the above delegated scheme.

The protocol is perfectly complete, since 
$\sum r[i]\gamma[i] = \sum r[i] (U^T A^i \Psi) =  \sum (r[i] U^T A^i) \Psi
= T \Psi$.

The protocol is sound because the $\gamma[i]$ are correctly verified by a
sound protocol. Then $\Psi$ being unknown when asking for
$T$, $T$ cannot be engineered to satisfy the last check: 
if $G=T-\sum r[i](U^TA^i)$ is non zero then there is $1/|\Su|$ chances that its
dot-product with $\Psi$ is zero.

Verifier time and space for $T$ are that of Corollary~\ref{cor:dense} for the
sequence $\Gamma$, and a supplementary dot-product. Verifier time and space for
$Z$ are also that of Corollary~\ref{cor:dense}.
Therefore, since $6n+2K\leq 8n$, overall, the Verifier runs now in time bounded
by: 
\begin{equation}\label{eq:twolev}
2\left(2\mu+10nK_2+8n\frac{K}{K_2}\right)+2n+8n\left\lceil\frac{\delta}{K}\right\rceil=4\mu+\bigO{nK_2+n\frac{K}{K_2}+n\frac{\delta}{K}}
\end{equation} 
with a certificate of size bounded by:
\[
\bigO{n\frac{\delta}{K}+n\frac{K}{K_2}+nK_2}.
\]

With $\delta=2n$, optimal values for $K$ and $K_2$ are now
respectively $n^{2/3}$ and $n^{1/3}$ for a Verifier in time $4\mu
+ \bigO{n^{1+1/3}}$ with a certificate of size $\bigO{n^{1+1/3}}$.  
\end{proof}

The extra work for the Prover is that of the computation of $Z$ and $T$ 
both bounded by $\bigO{\mu K}=\bigO{\mu n^{2/3}}$, of $\Gamma$ (if done together
with that of $T$, this requires only $K$ dot-products), and of the $Zs$ and $Ts$
for the verifications of $Z$, $T$ and $\Gamma$. Those are bounded by $\bigO{\mu
  K_2}=\bigO{\mu n^{1/3}}$. All this is negligible with respect to the
computation of the sequence, $\bigO{\mu n}$.

\subsection{More levels and a Verifier in time \texorpdfstring{$n^{1+1/k+o(1)}$}{n{1+1/k}}}\label{sec:morelev}
Once it is proven that the computation of $Z$ and $T$ can be delegated, then the
computation of $Z_2$ and $T_2$ in their verification can also be delegated.
The idea, is thus to use the protocol of section~\ref{sec:twolev}, also for $Z$
and $T$, but with two parameters $K_1$ and $K_2$ to
set and $\delta=K$ in equation~(\ref{eq:twolev}). 
The verification time for $Z$ and $T$ becomes
$4\mu+\bigO{nK_2+n\frac{K_1}{K_2}+n\frac{K}{K_1}}$ for each and, overall, the Verifier
thus runs now in time bounded by:
\begin{equation}\label{eq:threelev}
8\mu+\bigO{\left(nK_2+n\frac{K_1}{K_2}+n\frac{K}{K_1}\right)+n\frac{\delta}{K}}.
\end{equation}
With $K_2=n^{\alpha_2}$, $K_1=n^{\alpha_1}$, $K=n^\beta$, the optimal
values should equal
$1+\alpha_2=1+\alpha_1-\alpha_2=1+\beta-\alpha_1=2-\beta$, or differently
written,
$2\alpha_2-\alpha_1=0;-\alpha_2+2\alpha_1-\beta=0;-\alpha_1+2\beta=1$.
This yields $[\alpha_2=1/4,\alpha_1=1/2,\beta=3/4]$, so that 
$K_2=n^{1/4}$, $K_1=n^{1/2}$, $K=n^{3/4}$ and the complexity is bounded:
$$8\mu+\bigO{n^{1+1/4}}.$$
As previously, the size of the certificate is also reduced to $\bigO{n^{1+1/4}}$
and the extra work for the Prover is increased to $\bigO{\mu{}n^{3/4}}$, still
negligible with respect to $\bigO{\mu{}n}$.\\

More generally, for any $k$, we have 
\begin{equation}
\left[\begin{array}{ccccccc}
2 & -1 & 0 & 0 & \ldots & 0 \\
-1 & 2 & -1 & 0 & \ldots & 0\\
0 & \ddots & \ddots & \ddots & \ddots & \vdots \\
\vdots & \ddots & \ddots & \ddots & \ddots & 0\\
0 & \ldots & 0 & -1 & 2 & -1 \\
0 & \ldots &  \ldots & 0 & -1 & 2\\
\end{array} \right]
\left[\begin{array}{c}
\alpha_{k-2}\\
\alpha_{k-3}\\
\vdots\\
\alpha_2\\
\alpha_1\\
\beta\\
\end{array} \right]
=
\left[\begin{array}{c}
0\\
\vdots\\
\vdots\\
\vdots\\
0\\
1\\
\end{array} \right]
\end{equation}
For $L$ a unit lower triangular matrix, the latter gives, via Gaussian
elimination without pivoting:
\begin{equation}
\left[\begin{array}{ccccccc}
2 & -1 & 0 & 0 & \ldots & 0 \\
0 & \frac{3}{2} & -1 & 0 & \ldots & 0\\
0 & \ddots & \frac{4}{3} & \ddots & \ddots & \vdots \\
\vdots & \ddots & \ddots & \ddots & \ddots & 0\\
0 & \ldots & 0 & 0 &  \frac{k-1}{k-2} & -1 \\
0 & \ldots &  \ldots & 0 & 0 & \frac{k}{k-1}\\
\end{array} \right]
\left[\begin{array}{c}
\alpha_{k-2}\\
\alpha_{k-3}\\
\vdots\\
\alpha_2\\
\alpha_1\\
\beta\\
\end{array} \right]
= L^{-1}
\left[\begin{array}{c}
0\\
\vdots\\
\vdots\\
\vdots\\
0\\
1\\
\end{array} \right]
=
\left[\begin{array}{c}
0\\
\vdots\\
\vdots\\
\vdots\\
0\\
1\\
\end{array} \right]
\end{equation}
So that the solution is:
\begin{equation}
\left[\begin{array}{cccccc}
\alpha_{k-2}&
\alpha_{k-3}&
\ldots&
\alpha_2&
\alpha_1&
\beta
\end{array} \right]
= 
\left[\begin{array}{ccccccc}
\frac{1}{k}&
\frac{2}{k}&
\ldots&
\frac{k-2}{k}&
\frac{k-1}{k}
\end{array} \right]
\end{equation}
Thus
$n^{1+\alpha_{k-2}}=n^{1+\alpha_{k-3}-\alpha_{k-2}}=\ldots=n^{1+\beta-\alpha_1}=n^{2-\beta}=n^{1+1/k}$.

The size of the certificate is thus $\bigO{n^{1+1/k}}$, 
the time for the Verifier is $2^k\mu+\bigO{n^{1+1/k}}$ and the extra work for
the Prover becomes 
$\sum_{t=1}^k 2^t \mu n^{1-t/k}=\bigO{\mu{}n^{1-1/k}}$, still negligible with $\bigO{\mu{}n}$.

\newcommand{\PowerCertificate}[1]{\texttt{PowerCertificate\ensuremath{(#1)}}}
\newcommand{\SequenceCertificate}[1]{\texttt{SequenceCertificate\ensuremath{(#1)}}}
\newcommand{\CombinationCertificate}[1]{\texttt{CombinationCertificate\ensuremath{(#1)}}}
\section{\texorpdfstring{$\mu\log(n)+n\log^2(n)$}{Essentially linear} certificate}\label{sec:nlogn}
The same idea actually gives rise to a certificate verifiable with only
$\log_2(n)$ matrix-vector products: use a recursive certificate with
$K=\delta/2$.

We first need to separate the interactive protocol of Section~\ref{sec:twolev}
into atomic parts: a recursive interactive protocol for certifying a single
vector corresponding to a large power of $A$ times an initial vector and a
combination of mutually recursive protocols for the sequence.

\subsection{Certificate for the large powers}\label{sec:power}
We want here to certify that $Z \checks{} A^dV$. For this we will need to check
successive powers of two.

\subsubsection{Certificate for the large powers with a logarithmic number of
  matrix-vector products}\label{sec:logpower}
We define the certificate $\PowerCertificate{A,V,d}$ to be two vectors
$Z,Z_{/2}$ that satisfy $Z \checks{} A^d V$ and 
$Z_{/2}\checks{} A^{\lfloor{}d/2\rfloor}V$. 
Then checking this certificate is shown in
algorithm~\ref{alg:pc}.

\begin{algorithm}[htb]
\caption{Logarithmic Interactive recursive check of \PowerCertificate{A,V,d}}\label{alg:pc}
\begin{algorithmic}[1]
  \REQUIRE Matrix $A\in\F^{n\times{}n}$, vector $V\in\F^n$, exponent $d$;
\REQUIRE A pair of vectors $Z,Z_{/2}= \PowerCertificate{A,V,d}$.
\ENSURE $Z \checks{} A^d V$ \AND $Z_{/2}\checks{} A^{\lfloor{}d/2\rfloor}V$.
\IF{$d==1$}
\RETURN $Z_{/2}\checks{} V$ \AND $Z \checks{} AV$.
\ELSE
\STATE Uniformly sample $W\in\F^n$;
\STATE Request $(Y,Y_{/2})=\PowerCertificate{A^T,W,\lfloor{}d/2\rfloor}$ and
recursively check it;
\IF{$d$ is even}
\RETURN $W^T Z_{/2} \checks{} Y^T V$  \AND $W^T Z  \checks{} Y^T Z_{/2}$
\ELSE
\RETURN $W^T Z_{/2} \checks{} Y^T V$  \AND $W^T Z  \checks{} Y^T (A Z_{/2})$.
\ENDIF
\ENDIF
\end{algorithmic}
\end{algorithm}

\begin{lemma}\label{lem:pc} Algorithm~\ref{alg:pc} is sound and perfectly complete. 
It requires $\log_2(d)$ rounds, $3\log_2(d)n$ communications, $2d\mu$ arithmetic
operations for the Prover, and less than $(\mu+8n)\log_2(d)+\mu$ arithmetic
operations for the Verifier.
\end{lemma}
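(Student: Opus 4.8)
The plan is to treat the three assertions---perfect completeness, soundness, and the four complexity bounds---separately, proving soundness by induction on $d$ that mirrors the recursive structure of Algorithm~\ref{alg:pc}.

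\emph{Perfect completeness.} First I would check the accepting identities case by case. In the base case $d=1$ we have $\lfloor d/2\rfloor=0$, so the honest values are $Z_{/2}=A^0V=V$ and $Z=AV$, exactly what the two tests compare. For $d>1$, the recursive certificate $\PowerCertificate{A^T,W,m}$ with $m=\lfloor d/2\rfloor$ yields, by the completeness induction hypothesis, $Y=(A^T)^mW$, i.e. $Y^T=W^TA^m$. The first test then reads $W^TZ_{/2}=W^TA^mV=Y^TV$, using $Z_{/2}=A^mV$. For the second test, when $d=2m$ is even we get $Y^TZ_{/2}=W^TA^m\cdot A^mV=W^TA^{d}V=W^TZ$, and when $d=2m+1$ is odd we get $Y^T(AZ_{/2})=W^TA^m\cdot A\cdot A^mV=W^TA^{d}V=W^TZ$. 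So the honest prover always passes. The only conceptual point is that using $A^T$ in the recursion is precisely what lets $Y^TV$ reproduce $W^TA^mV$ as a dot product, with no matrix--vector product on the Verifier's side.

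\emph{Soundness.} Let $\epsilon(d)$ bound the probability of accepting when $Z\neq A^dV$ or $Z_{/2}\neq A^{\lfloor d/2\rfloor}V$; the deterministic base case gives $\epsilon(1)=0$. The crucial structural fact is that $Z,Z_{/2}$ are committed \emph{before} the Verifier samples $W$, hence are fixed vectors independent of $W$. By the induction hypothesis on the recursive call $(A^T,W,m)$, the recursion accepts only if $Y^T=W^TA^m$, except with probability $\epsilon(m)$. Conditioning on that good event, I split a false top-level statement into two exhaustive cases: if $Z_{/2}\neq A^mV$, the first test becomes $W^T(Z_{/2}-A^mV)\checks 0$ with $Z_{/2}-A^mV$ a fixed nonzero vector, so it passes with probability $1/|\Su|$ over the uniform $W\in\Su^n$; if instead $Z_{/2}=A^mV$ but $Z\neq A^dV$, then (using $Y^T=W^TA^m$ and the parity-matched second test) the second test becomes $W^T(Z-A^dV)\checks 0$ with $Z-A^dV$ fixed nonzero, again passing with probability $1/|\Su|$. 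Hence the acceptance probability is at most $\epsilon(m)+1/|\Su|$, giving $\epsilon(d)\le\epsilon(\lfloor d/2\rfloor)+1/|\Su|$ and therefore $\epsilon(d)\le\log_2(d)/|\Su|$, which can be made arbitrarily small.

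\emph{Complexity.} The recursion replaces $d$ by $\lfloor d/2\rfloor$, so its depth is $\log_2(d)$; each call is one round, sending $W$ ($n$ elements) and $(Y,Y_{/2})$ ($2n$ elements), for $3n\log_2(d)$ communications. The Prover forms $A^dV$ with $d$ matrix--vector products (obtaining $A^mV$ en route), and the successive recursive certificates cost $m\mu,\lfloor m/2\rfloor\mu,\dots$; summing $d+d/2+d/4+\cdots\le 2d$ gives at most $2d\mu$. The Verifier performs, at each level $d>1$, four size-$n$ dot products ($8n$ operations) and, only in the odd case, one product $AZ_{/2}$ costing $\mu$, hence at most $\mu+8n$ per level; the base case costs the single product $AV$, i.e. $\mu$. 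Summing over the $\log_2(d)$ levels yields the stated $(\mu+8n)\log_2(d)+\mu$. The delicate part of the whole argument is the soundness induction, not the arithmetic: one must argue that the commitment order makes $Z,Z_{/2}$ independent of the challenge $W$ so the dot-product test applies, and that the \emph{single} random $W$ simultaneously folds the exponent through the inductively certified $Y$ and tests both committed vectors, the two tests together covering every way the pair $(Z,Z_{/2})$ can be wrong.
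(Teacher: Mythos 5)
Your proposal is correct and follows essentially the same route as the paper's proof: induction on $d$ for perfect completeness, a case split on whether the response $Y$ equals $(A^T)^{\lfloor d/2\rfloor}W$ (and then on whether $Z_{/2}$ is correct) for soundness, and the same recurrences $T(d)\le T(d/2)+8n+\mu$ and $d+d/2+\cdots\le 2d$ for the complexity bounds. The only divergence is that your soundness accounting accumulates the error over the recursion to give $\log_2(d)/|\Su|$, whereas the paper asserts a bound of $1/|\Su|$ overall; your union bound is the more carefully justified of the two, and both suffice for the qualitative soundness claim of the lemma.
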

\begin{proof}
The protocol is perfectly complete by induction: the basis of the induction is 
given by the case $d==1$; then by induction $Y=(A^T)^{\lfloor{}d/2\rfloor}W$, so
that: 
\[ W^T Z_{/2} = W^T A^{\lfloor{}d/2\rfloor} V=(W^TA^{\lfloor{}d/2\rfloor}) V =
Y^T V\]
and, if $d$ is even:
\[ W^T Z = W^T A^d V=   (W^T A^{\lfloor{}d/2\rfloor})
(A^{\lfloor{}d/2\rfloor} V) = Y^T Z_{/2}\]

or, if $d$ is odd:
\[W^T Z = W^T A^d V=   (W^T A^{\lfloor{}d/2\rfloor}) A
(A^{\lfloor{}d/2\rfloor} V) = Y^T A Z_{/2}.\]

The protocol is sound: 
the Prover produces the commitments $Z$ and $Z_{/2}$, then the Verifier sends a
challenge $W$ and the Prover responds with $Y$. 
There, the Prover has two possibilities, either he returns a correct $Y$ or not.
In the first case, as $W$ was chosen uniformly at random,
there are two sub case, either $Z_{/2}$ is wrong or not.
if $Z_{/2}$ was incorrectly chosen so that $Z_{/2}-A^{\lfloor{}d/2\rfloor} V$ is non zero,
there is $1/|\Su|$ chances that its dot-product with $W^T$ is zero and thus that
it can pass the $W^T Z_{/2} \checks{} Y^T V$ check.
Conversely, if $Z_{/2}$ is correct, if $Z$ was incorrectly chosen so that 
$Z-A^{\lceil{}d/2\rceil}Z_{/2}$ is non zero, there is $1/|\Su|$ chances that its
dot-product with $W^T$ is zero. Both tests are not independent but overall there
are less than $1/|\Su|$ chances to pass both of them. 
In the second case $Y$, $Y$ is incorrect but can very well be made to make both
latter dot-products zero, for any values of $Z$, $Z_{/2}$ and $W$.  
But if $Y$ is incorrect, it will not pass the recursive test if
$\lfloor{}d/2\rfloor=1$, and will pass it only with probability $|\Su|^{-1}$ for
other values of $d$. Therefore, if Peggy's commitment was incorrect, the
probability that it passes all the subsequent tests of Algorithm~\ref{alg:pc} is
less than $|\Su|^{-1}$. 

Now, Communication is that of the certificate, the $3$ vectors $W$, $Y$ and
$Y_{/2}$, per recursive call, that is $3\log_2(d)n$. 
Time complexity for the Verifier satisfies $\{T(d)\leq T(d/2)+2*4n+\mu,T(1)=\mu\}$,
that is less than $(\mu+8n)\log_2(d)+\mu$. 
Now the cost has been transferred to the prover, who has to compute the sequence
plus half a sequence, plus a fourth of a sequence, ..., recursively the
overall cost for the Prover is doubled to $2d\mu$.
\end{proof}

\subsubsection{Public verifiability of the large power}
Another view of the verification of Algorithm~\ref{alg:pc} can be given as an
interactive certificate in Figure~\ref{fig:pc}.  
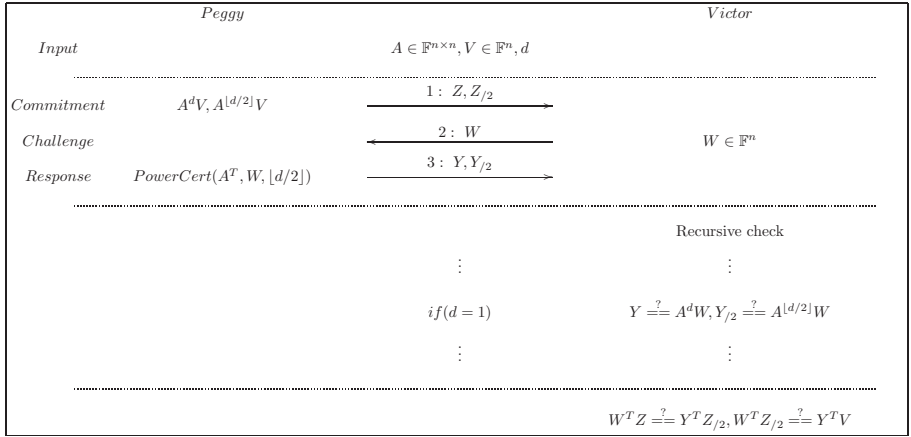
\begin{figure}[htbp]\center
  \noindent\resizebox{\linewidth}{!}{$$
    \xy
    \xymatrix@R=7pt@C=1pc@W=15pt{
      &Peggy &&&& Victor\\
      Input&&& A\in\F^{n\times n}, V\in\F^n, d &&\\
      \ar@{.}[rrrrrr]&&&&&&\\
      Commitment&A^dV, A^{\lfloor{}d/2\rfloor}V&\ar[rr]^*[*1.]{1:\ Z,Z_{/2}}
      &&&  \\
      Challenge& & && \ar[ll]_*[*1.]{2:\ W}& W\in\F^n\\
      Response&PowerCert(A^T,W,\lfloor{}d/2\rfloor) &\ar[rr]^*[*1.]{3:\
        Y,Y_{/2}}&&&  \\
      \ar@{.}[rrrrrr]&&&&&&\\
      &&& && \text{Recursive check} \\
      && &\vdots&& \vdots\\
&& & if(d=1) && Y \checks{} A^d W, Y_{/2} \checks{} A^{\lfloor{}d/2\rfloor} W\\
      && &\vdots&& \vdots\\
      \ar@{.}[rrrrrr]&&&&&&\\
      && &&& W^T Z  \checks{} Y^T Z_{/2},W^T Z_{/2} \checks{} Y^T V\\
    }\POS*\frm{-}
    \endxy
    $$
  }
  \caption{Interactive certificate for $A^dV$}\label{fig:pc}
\end{figure}

As the challenge is only random samples selected after the commitment (and this
is true also recursively), Fiat-Shamir heuristic can be used at each
step~\cite{Fiat:1986:Shamir,Bellare:1993:randomoracle,Bernhard:2012:fiatshamir}:
$W$ can be just the result of a cryptographically strong hash function on $A$,
$V$, $d$, {\em and} $Z$, $Z_{/2}$. 
Then any external verifier can simulate the whole protocol by recomputing also
the hashes.

\subsubsection{Certificate for the large powers with a single matrix-vector product}\label{sec:singlepower}
Actually, algorithm~\ref{alg:pc} can be made to require a {\em single}
matrix-vector product. The speed up for the verifier is obtained by recursively
asking for a little more: some arithmetic cost for the Verifier is traded-off
with an  extra cost for the Prover and some extra communications. 

The certificate $\PowerCertificate{A,V,d}$ is modified to be {\em three}
vectors: for any $t$ such that $2^{t}\geq d$, we
check $A^d V$, together with $A^{2^{t}}V$ and $A^{2^{t-1}}V$. 

\begin{algorithm}[htb]
\caption{Interactive recursive check of \PowerCertificate{A,V,d,2^t}}\label{alg:single}
\begin{algorithmic}[1]
  \REQUIRE Matrix $A\in\F^{n\times{}n}$, vector $V\in\F^n$, exponent $d\geq 2$, $t$
  such that $2^t\geq d$;
\REQUIRE A triple of vectors $Z_{t},Z,Z_{t-1}= \PowerCertificate{A,V,d,2^t}$.
\ENSURE  $Z_{t-1}\checks{} A^{2^{t-1}}V$\AND $Z \checks{} A^d V$ \AND $Z_{t}\checks{} A^{2^{t}}V$ .
\STATE Uniformly sample $W\in\F^n$;
\IF{$d==2$}
\STATE Compute $Y=A^T W$;
\RETURN $W^T Z_0 \checks{} Y^T V$ \AND $W^T Z  \checks{} Y^T Z_0$ \AND
$Z_1\checks{} Z$.
\ELSE
\STATE Request $(Y_{t-1},Y,Y_{t-2})=\PowerCertificate{A^T,W,d-2^{t-1},2^{t-1}}$ and
recursively check it;
\RETURN $W^T Z_{t-1} \checks{} Y^T V$  \AND $W^T Z  \checks{} Y^T Z_{t-1}$ \AND
$W^TZ_t \checks Y_{t-1}^T Z_{t-1}$.
\ENDIF
\end{algorithmic}
\end{algorithm}

\begin{lemma}\label{lem:pc2} Algorithm~\ref{alg:single} is sound and perfectly complete. 
It requires $\log_2(d)$ rounds, $4\log_2(d)n$ communications, $2^{t+1}\mu$, less
than $4d\mu$ arithmetic operations for the Prover, and less than
$\mu+8n+12n\log_2(d)$ arithmetic 
operations for the Verifier.
\end{lemma}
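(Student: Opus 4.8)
The plan is to establish the four assertions---perfect completeness, soundness, the round and communication counts, and the Prover/Verifier arithmetic costs---largely in parallel with the proof of Lemma~\ref{lem:pc}, the one genuinely new feature being that a single matrix--vector product suffices. The structural reason is that the commitment now also carries $Z_t\approx A^{2^t}V$ alongside $Z\approx A^dV$ and $Z_{t-1}\approx A^{2^{t-1}}V$: the recursive call reduces the middle exponent from $d$ to $d':=d-2^{t-1}\le 2^{t-1}$ (the bound $2^{t-1}\ge d'$ being exactly the hypothesis $2^t\ge d$), and each of the three returned equalities expresses one of $Z_{t-1}=A^{2^{t-1}}V$, $Z=A^{d-2^{t-1}}Z_{t-1}$ and $Z_t=A^{2^{t-1}}Z_{t-1}$ by applying a power of $A^{T}$ to the challenge $W$ rather than a power of $A$ to $V$. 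Consequently $A$ is never applied to a recursively produced vector, and the lone matrix--vector product $Y=A^{T}W$ occurs only in the base case $d=2$.

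For \emph{perfect completeness} I would induct on $t$ (equivalently on the recursion depth $\log_2 d$). In the base case $d=2$, with $Y=A^{T}W$, the three checks $W^{T}Z_{0}=Y^{T}V$, $W^{T}Z=Y^{T}Z_{0}$ and $Z_{1}=Z$ reduce directly to $Z_0=AV$, $Z=A^2V$ and $Z_1=A^2V$. For the inductive step I would assume the recursively checked triple is faithful, i.e. $Y_{t-1}=(A^{T})^{2^{t-1}}W$, $Y=(A^{T})^{d'}W$ and $Y_{t-2}=(A^{T})^{2^{t-2}}W$, and verify each returned equality as an identity using the adjoint/shift relation $W^{T}A^{k}V=((A^{T})^{k}W)^{T}V$ together with the splittings $A^{d}=A^{d-2^{t-1}}A^{2^{t-1}}$ and $A^{2^{t}}=A^{2^{t-1}}A^{2^{t-1}}$; for instance $W^{T}Z_t=W^{T}A^{2^{t-1}}Z_{t-1}=Y_{t-1}^{T}Z_{t-1}$. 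Here one must pair the first check with the response vector whose exponent matches $A^{2^{t-1}}V$, so that it genuinely certifies $Z_{t-1}$.

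For \emph{soundness} I would reproduce the two-case argument of Lemma~\ref{lem:pc}, inducting on the depth. Fix a commitment $(Z_t,Z,Z_{t-1})$ and the $W\in\Su^n$ sampled uniformly \emph{after} the commitment. If the Prover's response $(Y_{t-1},Y,Y_{t-2})$ is a faithful power certificate for $A^{T}$, then any component of the commitment differing from its claimed value is a nonzero vector whose inner product with the independent uniform $W$ vanishes with probability at most $1/|\Su|$, so the matching dot-product check fails except with that probability; the three checks are dependent, but exactly as in Lemma~\ref{lem:pc} the probability of passing all of them while some component is wrong is still at most $1/|\Su|$. If instead the response is unfaithful, the recursive verification---deterministic at the base case and of error $\le 1/|\Su|$ otherwise---rejects except with probability $\le 1/|\Su|$ by the inductive hypothesis. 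Combining the cases yields total soundness error $\le 1/|\Su|$.

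Finally the \emph{resource counts} follow from the recursion geometry. Since $t$ drops by one and the middle exponent stays $\le 2^{t-1}$ at each level, the depth, hence the number of rounds, is $\log_2 d$; each round exchanges the challenge $W$ ($n$ elements) and the response triple ($3n$ elements), giving $4n\log_2 d$ communication. The Prover builds the iterates $A^{j}V$ up to $j=2^{t}$ (cost $2^{t}\mu$), then those of $A^{T}$ up to $2^{t-1}$, and so on; the geometric sum is $2^{t+1}\mu<4d\mu$ because $2^{t}<2d$. For the Verifier, the recursive step performs only six $O(n)$ dot products with no application of $A$, so its time obeys $V(t)\le V(t-1)+12n$ with base value $V(1)\le\mu+8n$ from the single product $Y=A^{T}W$ and the base-case dot products, solving to $\mu+8n+12n\log_2 d$. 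I expect the main obstacle to be the soundness bookkeeping for the three dependent checks, together with the small-exponent edge cases of the recursion (notably $d-2^{t-1}=1$, which must be routed to a direct check rather than a recursive call), since these are precisely where a naive induction can silently break.
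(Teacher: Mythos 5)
Your proposal is correct and follows exactly the route the paper intends: the paper's entire proof of Lemma~\ref{lem:pc2} is the single sentence that it is similar to Lemma~\ref{lem:pc}, and your completeness/soundness induction, the round and communication counts, the Verifier recurrence $V(t)\le V(t-1)+12n$ with $V(1)\le\mu+8n$, and the geometric Prover sum $2^{t+1}\mu<4d\mu$ are precisely the details that sentence elides. You also correctly observe that the first returned test in Algorithm~\ref{alg:single} must pair $Z_{t-1}$ with $Y_{t-1}$ rather than with $Y$ as printed (the printed check is an identity only when $d=2^t$), and you rightly flag the $d-2^{t-1}\in\{0,1\}$ edge cases; both are repairs to the algorithm's statement rather than gaps in your argument.
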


The proof is similar to that of Lemma~\ref{lem:pc}.

\subsection{Certificate for the sequence}
Now the idea is to use the protocol of Section~\ref{sec:checkpoint}, with
$K=\delta/2$, but with the computations of $Z$ and $T$ completely delegated.
The computation of $Z$ can be verified, using either one of the
\PowerCertificate{\ldots} protocols of Section~\ref{sec:power}.
Wiedemann's Krylov sequence and $T$ will then be verified with two distinct
protocols, mutually recursive:
\begin{itemize}
\item For the sequence, with $K=\delta/2$, the verification loop of
  point~\ref{checkZ} is reduced to the verification of two checkpoint vectors 
  $(W,W_{/2})$ and of two parts of the sequence $S=(s[i])=(s_H,s_L)$. 
  Thus the data structure \SequenceCertificate{U,A,V,d} is a combination of two
  vectors $(W,W_{/2})$, a sequence $S=(s[i])$ and two other certificates, one
  for $Z$: \PowerCertificate{A^T,X,d/2} and the second one for the linear
  combination $T$:
  \CombinationCertificate{R,U,A,d/2}, for uniformly sampled $X$ and $R$. 
  The checkpoint vectors satisfy $W \checks{} A^d V$ and
  $W_{/2}\checks{} A^{\rfloor{}d/2\lfloor{}}V$, and the output sequence
  satisfies the expected $S=(s[i])=(s_H,s_L) \checks{} U^T A^i  V$ for $i=0..d$.
\item For the delegation of $T$, it is sufficient to generate a certified
  sequence with another right projection. Thus,
  \CombinationCertificate{R,U,A,d} is a combination of the vector $T$, that must
  satisfy as expected $T \checks{} \sum r[i] U^T A^i$ and of another
  certificate, \SequenceCertificate{U,A,\Psi,d}, for a uniformly sampled
  $\Psi$. 
\end{itemize}

Checking these two certificates is done by using the following two mutually
recursive procedures, shown in algorithms~\ref{alg:sc} and~\ref{alg:cc}.
\begin{algorithm}[htb]
\caption{Interactive check of \SequenceCertificate{U,A,V,d}}\label{alg:sc}
\begin{algorithmic}[1]
  \REQUIRE Matrix $A\in\F^{n\times{}n}$, two vectors $U,V\in\F^n$, sequence length
  $d+1$ with $d\geq 2$; 
\REQUIRE A pair of vectors $W,W_{/2}\in\F^n$;
\REQUIRE A sequence $(s[i])\in\F^{d+1}$.
\ENSURE $W=A^{2\left\lceil\frac{d}{2}\right\rceil}V$ \AND $W_{/2}=A^{\left\lceil\frac{d}{2}\right\rceil}V$;
\ENSURE $s[i] \checks{} U^T A^i V$ for $i=0..d$.
\IF{d==2} 
\RETURN $s[0] \checks{} U^T V$ \AND $W_{/2}\checks{}AV$ \AND $s[1]\checks{}U^TW_{/2}$
\AND $W\checks AW_{/2}$ \AND $s[2]\checks{}U^TW$.
\ELSE
\STATE Uniformly sample $X\in\F^n$;
\STATE Ask for $(Z,\ldots)=\PowerCertificate{A^T,X,\lceil{}d/2\rceil}$ and check it;
\STATE Let $first\leftarrow X^TW_{/2} \checks{} Z^T V$;
\STATE Let $second\leftarrow X^T W \checks{} Z^T W_{/2}$;
\STATE Uniformly sample $R\in\F^{\left\lceil\frac{d}{2}\right\rceil+1}$;
\STATE Ask for $(T,\ldots)=\CombinationCertificate{R,U,A,\left\lceil\frac{d}{2}\right\rceil}$ and check it;
\STATE Let $s_L=(s[0],\ldots,s\left[\left\lceil\frac{d}{2}\right\rceil\right])$ and $third\leftarrow R^T s_L \checks{} T^T W_{/2}$;
\STATE Let $s_H=(s\left[\left\lfloor\frac{d}{2}\right\rfloor\right],\ldots,s[d])$ and $fourth\leftarrow R^T s_H \checks{} T^T W$;
\RETURN $first$ \AND $second $ \AND $third$ \AND $fourth$.
\ENDIF
\end{algorithmic}
\end{algorithm}

\begin{algorithm}[htb]
\caption{Interactive check of \CombinationCertificate{R,U,A,d}}\label{alg:cc}
\begin{algorithmic}[1]
\REQUIRE Matrix $A\in\F^{n\times{}n}$, two vectors $R\in\F^{d+1}$ and $U\in\F^n$, sequence length $d+1$;
\REQUIRE A vector $T\in\F^n$.
\ENSURE $T \checks\sum_{i=0}^d r[i] U^T A^i$.
\STATE Uniformly sample $\Psi\in\F^n$;
\STATE Ask for $(\Gamma,\ldots) =\SequenceCertificate{U,A,\Psi,d}$ and check it;
\RETURN $R^T \Gamma \checks{} T \Psi$.
\end{algorithmic}
\end{algorithm}


\begin{theorem} 
  Let $A\in\F^{n\times{}n}$ whose matrix-vector product can be computed in less
  than $\mu>n$ arithmetic operations and a vector $V_0\in\F^n$. 
  There exists a
  certificate of size $\bigO{n\log(n)}$ for the $\delta+1$ first elements of
  Wiedemann's Krylov sequence associated to~$A$ and~$V_0$. 
  This certificate can be checked using the protocol of Algorithm~\ref{alg:sc}. 
  Depending on the \PowerCertificate{\ldots} routine chosen, the constant factor of this
  size and the Prover and Verifier arithmetic complexity bounds for this
  protocol are given in table~\ref{tab:sc}.
\begin{table}[htbp]
\[\begin{array}{|c||c|c|c|}
\hline
\text{Power} & \multirow{2}{*}{\text{Verifier}} & \text{Extra} &  \multirow{2}{*}{\text{Prover}}\\
\text{Certificate} & & \text{Communication} &\\
\hline
\S~\ref{sec:logpower} & \LogPowCertVerif & \LogPowCertMem & \LogPowCertProve \\
\S~\ref{sec:singlepower} & \PowCertVerif & \PowCertMem & \PowCertProve \\
\hline
\end{array}\]
\caption{Dominant terms of the complexity bounds for the verification of Wiedemann's Krylov sequence
  depending on the certification of $Z\checks{}A^dV$.}\label{tab:sc}
\end{table}

\end{theorem}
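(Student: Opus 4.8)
The plan is to prove the statement in two stages: first the correctness (perfect completeness together with soundness) of the pair of mutually recursive protocols of Algorithms~\ref{alg:sc} and~\ref{alg:cc}, and then the three rows of Table~\ref{tab:sc} by unrolling the recurrences these two procedures induce. Correctness I would establish by simultaneous induction on the length parameter $d$, since \SequenceCertificate{\ldots} and \CombinationCertificate{\ldots} call one another with $d$ (essentially) halved at each descent, down to the base case $d=2$ of \SequenceCertificate{\ldots}, which is checked directly against the definition $s[i]\checks U^TA^iV$. For the inductive step, perfect completeness reduces exactly to the telescoping identities of Lemma~\ref{lem:checkZT}: once the \PowerCertificate{\ldots} sub-call certifies $Z=(A^T)^{\lceil d/2\rceil}X$, the checks $X^TW_{/2}\checks Z^TV$ and $X^TW\checks Z^TW_{/2}$ are the identities $X^TA^{\lceil d/2\rceil}V=Z^TV$ and $X^TA^{2\lceil d/2\rceil}V=Z^TW_{/2}$; once the \CombinationCertificate{\ldots} sub-call certifies $T=\sum r[i]U^TA^i$, the two combination checks are precisely the checkpoint identities of Section~\ref{sec:checkpoint} (point~\ref{checkT}) at $K=\lceil d/2\rceil$ with checkpoints $V$ and $W_{/2}$, and the final check $R^T\Gamma\checks T\Psi$ of \CombinationCertificate{\ldots} is $\sum r[i]U^TA^i\Psi=T\Psi$ once $\Gamma$ is certified. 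Perfect completeness of the \PowerCertificate{\ldots} sub-calls is imported directly from Lemma~\ref{lem:pc} (resp. Lemma~\ref{lem:pc2}).

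For soundness I would track, level by level, that every challenge is drawn \emph{after} the commitment it is meant to test: $X$ and $R$ are sampled after $W,W_{/2}$ and the $s[i]$ are committed, and $\Psi$ is sampled after $T$ is returned. Hence any nonzero error in a committed vector or sequence survives a round only if its inner product with a fresh uniform vector in $\Su^n$ vanishes, an event of probability $\le 1/|\Su|$, while an incorrect \PowerCertificate{\ldots} or recursive \SequenceCertificate{\ldots} response is caught by the soundness of that sub-protocol. A union bound over the $\bigO{\log(n)}$ recursion levels then bounds the total soundness error by $\bigO{\log(n)/|\Su|}$, made arbitrarily small by enlarging $\Su$ or by repetition. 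This is the step I expect to be the main obstacle: within a single level the checks testing $W_{/2}$, $W$ and the two halves $s_L,s_H$ share the same challenge and are not independent, so---exactly as in the proof of Lemma~\ref{lem:pc}---one must case-split on whether the Prover returns an honest sub-response and argue that at least one check fails with probability $\ge 1-1/|\Su|$ whenever the committed data are wrong, rather than naively multiplying failure probabilities across non-independent tests.

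For the complexity I would set up coupled recurrences for Verifier time, communication and Prover work. Since \CombinationCertificate{R,U,A,d} is just one extra dot-product on top of a \SequenceCertificate{U,A,\Psi,d} of equal length, it collapses into the sequence recurrence, leaving a single descent of depth $\log_2(\delta)=\log_2(n)+\bigO{1}$ in which each level pays one \PowerCertificate{\ldots} at exponent $\lceil d/2\rceil$ plus $\bigO{n}$ for the four dot-product checks. Summing the per-level Verifier cost of the chosen power certificate over these $\log_2(n)$ levels yields the $\log^2$ factors: with Lemma~\ref{lem:pc} the per-level cost $(\mu+8n)\log_2(d)+\mu$ sums to \LogPowCertVerif, whereas with Lemma~\ref{lem:pc2} the per-level cost $\mu+8n+12n\log_2(d)$ sums to \PowCertVerif. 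The identical telescoping over the $3\log_2(d)n$ (resp. $4\log_2(d)n$) exchanged vectors gives the communication bounds \LogPowCertMem\ and \PowCertMem.

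For the Prover I would account, at each level, for one sequence of length $d$ ($\approx d\mu$), one power certificate ($2d\mu$ under Lemma~\ref{lem:pc}, or $\le 4d\mu$ under Lemma~\ref{lem:pc2}, after noting the exponent requested is $\lceil d/2\rceil$), and the vector $T$ at cost $\approx\tfrac12 d\mu$, its left iterates $U^TA^i$ being reused to form $\Gamma$. The resulting per-level totals, $\tfrac52 d\mu$ and $\tfrac72 d\mu$, sum geometrically over the halving lengths $d=\delta,\delta/2,\ldots$ to $5\cdot 2n\mu=5\Seq{n}$ and $7\cdot 2n\mu=7\Seq{n}$, matching \LogPowCertProve\ and \PowCertProve\ and establishing both rows of Table~\ref{tab:sc}. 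The only bookkeeping care needed here is the floor/ceil parity of the split and the reuse of iterates, neither of which affects the dominant constants.
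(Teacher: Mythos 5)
Your proposal is correct and follows essentially the same route as the paper's proof: simultaneous induction on $d$ for perfect completeness and soundness (reducing the explicit per-level checks to Lemma~\ref{lem:checkZT} and Theorem~\ref{thm:soundness}, and the sub-certificates to Lemmas~\ref{lem:pc} and~\ref{lem:pc2}), followed by the same collapsed recurrence $\SequenceCertificate{d}=\PowerCertificate{d/2}+\SequenceCertificate{d/2}+\bigO{n+d}$ for Verifier time, communication and Prover work, yielding the same dominant constants in Table~\ref{tab:sc}. The one quantitative slip, inessential to the statement since no explicit soundness probability is claimed, is that each of the $\log_2(n)$ sequence levels spawns a \PowerCertificate{\ldots} carrying its own $\bigO{\log n}$ fresh challenges, so the union bound over all challenges gives a soundness error of $\bigO{\log^2(n)/|\Su|}$ rather than $\bigO{\log(n)/|\Su|}$.
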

\begin{proof}
The protocol is sound and perfectly complete by induction on the size of
sequence: the case $d==1$ in Algorithm~\ref{alg:sc} gives the base of the
induction; then the four explicit checks are correct thanks to
Lemma~\ref{lem:checkZT} and sound thanks to Theorem~\ref{thm:soundness}; 
\PowerCertificate{\ldots} is correct and sound by Lemma~\ref{lem:pc} or
Lemma~\ref{lem:pc2}; and \CombinationCertificate{\ldots} is correct and sound,
first by induction on \SequenceCertificate{\ldots} with half the initial size, and
second, since the explicit check is correct and sound by
Theorem~\ref{thm:dense}.

Complexity for the Verifier of the \SequenceCertificate{\ldots} sequence
satisfies 
\begin{equation*}
\begin{split}
\{ \SequenceCertificate{d} & = \PowerCertificate{d/2} \\
& \quad + \CombinationCertificate{d/2} + 12n+2d,\\
\SequenceCertificate{2}31
&=2\mu+6n\}.
\end{split}
\end{equation*} 

Complexity for the Verifier of $T$ satisfies
$\{\CombinationCertificate{x}=\SequenceCertificate{x}+2n+2x\}$.\\

With $\PowerCertificate{x}=(\mu+8n)\log_2(x)+\mu$ (see Lemma~\ref{lem:pc}), the dominant terms of
the complexity bound for the Verifier is thus: 
\[
 \SequenceCertificate{d}=\frac{1}{2}\mu\log_2^2(d)+4n\log_2^2(d)
\]
Similarly, with $\PowerCertificate{x}=\mu+8n+12n\log_2(x)+\mu$ (see
Lemma~\ref{lem:pc2}) we get:
\[
 \SequenceCertificate{d}=\mu\log_2(d)+6n\log_2^2(d)
\]
With $d=2n$ we obtain the Verifier column of Table~\ref{tab:sc}.

Similarly, communication is dominated either by $\frac{3}{2}n\log_2^2(d)$ or
$2n\log_2^2(d)$.

The Prover has to compute the Krylov space and the Krylov sequence plus the work
for $Z$, the work for $T$ and the recursive calls: 
$P(d)=(d\mu+2dn)+\PowerCertificate{d/2}+((d/2)\mu+2(d/2)n+P(d/2))$, 
so that the overall extra cost for the Prover is dominated by either 
$5d\mu+6dn$ or $7d\mu+6dn$.
For $d=2n$, the cost for the Prover without verification is
$\Seq(n)=2n\mu+4n^2$, which induces the last column of Table~\ref{tab:sc}.
\end{proof}

\section{Certificate for the determinant, the minimal and the characteristic
  polynomials}\label{sec:applis}
We denote by {\sc SeqCert} a certificate for Wiedemann's Krylov sequence. 
This can be for instance any of the subquadratic certificate of
Sections~\ref{sec:checkpoint},~\ref{sec:recursive} or~\ref{sec:nlogn}.

This induces directly a certificate for the minimal polynomial of a sequence: 
the Prover just produces the sequence, and the Verifier computes by himself the
minimal polynomial of the sequence via the fast extended Euclidean algorithm
(EEA). In a sufficiently large field, Wiedemann has shown that
this in turn induces a certificate for the minimal polynomial of a matrix, {\sc
  MinPoly}. In smaller fields one would need to use a certificate for a Block
Wiedemann sequence, and maybe some variants of the certificate of
Section~\ref{sec:blockwied}. 
Then a certificate for the determinant, {\sc Det}, is obtained via Wiedemann's
preconditioning, {\sc PreCondCyc}, insuring the square-freeness of the
characteristic polynomial. 
Finally, to get a certificate for the characteristic polynomial of a matrix,
{\sc CharPoly}, first ask for the characteristic polynomial, and then it is
sufficient to certify the determinant at a random point. 

We propose in Table~\ref{tab:applis} a summary of the reductions presented
in this section.
The details of these reductions and the proofs of the complexity claims shown in
Table~\ref{tab:applis} are given in
Theorems~\ref{thm:minpoly},~\ref{thm:det} and~\ref{thm:charpoly}. 

\begin{table}[htb]\center
\begin{tabular}{|c|c|}
\hline 
\multicolumn{2}{|c|}{{\sc MinPoly}}  \\
\hline 
Verifier & Verify({\sc SeqCert})+EEA \\
Communication & Communicate({\sc SeqCert})$+2n$ \\
Prover & Compute({\sc SeqCert})\\
\hline
\hline 
\multicolumn{2}{|c|}{{\sc Det}} \\
\hline
Verifier & Verifier({\sc MinPoly})\\
Communication & Communicate({\sc MinPoly}+{\sc PreCondCyc}) \\
Prover & Compute({\sc MinPoly}+{\sc
  PreCondCyc})\\
\hline
\hline
\multicolumn{2}{|c|}{{\sc CharPoly}} \\
\hline
Verifier & Verify({\sc Det})$+2n$ \\
Communication & Communicate({\sc Det})$+n$ \\
Prover & Compute({\sc CharPoly})+Compute({\sc Det})\\
\hline
\end{tabular}
\caption{Summary of the complexity reductions for the certification of the
  determinant, the minimal and the characteristic polynomials of sparse
  matrices}\label{tab:applis}
\end{table}

\subsection{{\sc MinPoly}}
\begin{theorem}[\cite{Wiedemann:1986:SSLE}]\label{thm:minpoly}
  Certifying the minimal polynomial can be reduced to the
  certification of Wiedemann's Krylov sequence.
\end{theorem}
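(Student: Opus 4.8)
The plan is to package the already-certified Krylov sequence together with one extended-Euclidean computation into a certificate for the minimal polynomial $f_A$ of the matrix. First I would have the Verifier sample $U,V_0\in\F^n$ uniformly from a large subset $\Su\subseteq\F$ (these are exactly the projection vectors consumed by {\sc SeqCert}), and have the Prover commit to the claimed polynomial $f$. The Verifier runs {\sc SeqCert} on the scalar sequence $S=(s[i])=(U^T A^i V_0)$; once {\sc SeqCert} has certified that $S$ is genuine, the Verifier computes the minimal polynomial $f_S$ of the linearly recurrent sequence $S$ by a single fast EEA (Berlekamp--Massey) pass over the certified terms, and accepts iff $f=f_S$. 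The output the Verifier is willing to vouch for is precisely this $f_S$.

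Two facts drive the analysis. The first is deterministic and is the easy step: $f_S \mid f_A$ always holds, because $f_A(A)=0$ gives $U^T f_A(A)A^iV_0=0$ for every $i$, so $f_A$ annihilates $S$ and therefore $f_S$ divides it. The second fact, $f_S=f_A$, is Wiedemann's probabilistic guarantee and is where the real work lies. I would argue it in the two standard layers. For the right projection, the minimal polynomial $f_{A,V_0}$ of the vector sequence $(A^iV_0)_i$ equals $f_A$ for a random $V_0$: there exists a vector realizing the largest invariant factor of $A$, the ``bad'' vectors lie in a proper subvariety, and a Schwartz--Zippel estimate bounds the failure by $\bigO{\deg(f_A)/|\Su|}$. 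For the left projection, $f_S=f_{A,V_0}$ for a random $U$ by the same kind of argument. Combining the two layers gives $f_S=f_A$ except with probability $\bigO{\deg(f_A)/|\Su|}$ over a sufficiently large $\Su$; since this is exactly the content attributed to \cite{Wiedemann:1986:SSLE}, I would invoke it rather than re-derive the bound.

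Completeness and soundness then follow from these two facts together with the properties of {\sc SeqCert}. The sequence-level statement ``$f$ is the minimal polynomial of $S$'' is \emph{perfectly} complete and sound: {\sc SeqCert} is perfectly complete and forces the committed sequence to be the true one except with negligible probability, after which the Verifier recomputes $f_S$ deterministically, so any $f\neq f_S$ is rejected outright. The lift to the matrix, ``$f$ is the minimal polynomial of $A$'', is only Monte~Carlo: a cheating Prover can slip a wrong $f$ past the check only when $f=f_S\neq f_A$, which requires the rare Wiedemann event and cannot be engineered since $U,V_0$ are the Verifier's secret random choices; symmetrically, an honest Prover's correct $f=f_A$ is rejected only on the same rare event. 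Both error terms can be driven down by enlarging $\Su$ or by repetition.

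The main obstacle is thus purely the probabilistic equality $f_S=f_A$: unlike the annihilation direction it is not deterministic and needs a large enough field (or sampling set), which is precisely why the statement is phrased for a sufficiently large field and why smaller fields are deferred to the block variant of Section~\ref{sec:blockwied}. Everything else is routine bookkeeping for the {\sc MinPoly} row of Table~\ref{tab:applis}: the only communication beyond {\sc SeqCert} is the $\leq n+1$ coefficients of $f$ (bounded by $2n$), and the only Verifier work beyond verifying {\sc SeqCert} is one fast EEA on the certified terms, which is essentially linear and hence dominated by the cost of {\sc SeqCert} itself.
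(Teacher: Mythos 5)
Your proposal matches the paper's proof: both reduce {\sc MinPoly} to {\sc SeqCert} by having the Verifier certify the projected sequence $S=(U^TA^iV_0)$, recover its minimal polynomial himself via the fast Euclidean algorithm, and invoke Wiedemann's analysis (his Proposition~4) for the probabilistic guarantee that the sequence's minimal polynomial equals that of the matrix for random projections over a large enough field, with the same $O(n)$ extra communication and EEA-dominated Verifier overhead. Your write-up is simply a more explicit unpacking of the divisibility and two-layer probability argument that the paper delegates entirely to the citation.
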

\begin{proof}
The minimal polynomial of a linearly recurrent sequence can be computed by the
fast Euclidean algorithm, see,
e.g.,~\cite[Theorem~12.10]{VonzurGathen:2013:MCA}. Then Wiedemann's analysis
shows that in a sufficiently large field the minimal polynomial of a matrix can
be recovered by computing the lowest common multiple of the minimal polynomial
of sequences obtained by random
projections~\cite[Proposition~4]{Wiedemann:1986:SSLE}. 

Therefore, the work of the Prover is just that of computing minimal polynomials
of sequences at given vector projections. Communication is that of the two
vector projections, $2n$. Finally the work of the Verifier is to verify the
certificate for the sequence and then to apply the fast Euclidean algorithm, at
cost $n^{1+o(1)}$, to recover the minimal polynomial by himself.
\end{proof}

\subsection{{\sc Det}}
\begin{theorem}[\cite{Wiedemann:1986:SSLE}]\label{thm:det}
  Certifying the determinant can be reduced to the
  certification of the minimal polynomial.
\end{theorem}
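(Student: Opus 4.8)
The plan is to exploit the elementary identity relating the determinant to the constant coefficient of the characteristic polynomial, and to turn the minimal polynomial certificate of Theorem~\ref{thm:minpoly} into a determinant certificate by preconditioning the matrix so that its minimal and characteristic polynomials coincide. For a monic characteristic polynomial $\chi_A(x)=\det(xI-A)=x^n+c_{n-1}x^{n-1}+\cdots+c_0$ one has $c_0=\chi_A(0)=\det(-A)=(-1)^n\det(A)$, so $\det(A)=(-1)^n\chi_A(0)$. Hence, \emph{if} the certified minimal polynomial were known to equal the characteristic polynomial, the Verifier could read off the determinant directly from its constant term. The whole difficulty is that in general the minimal polynomial is a proper divisor of the characteristic polynomial (derogatory $A$), and a cheating Prover could exploit exactly this gap: a degree-$<n$ minimal polynomial carries no information about $\det(A)$.

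First I would therefore have the Verifier issue a random preconditioner as a challenge through {\sc PreCondCyc}, replacing $A$ by $\tilde A=DA$ for a random diagonal $D$ sampled from a subset $\Su\subseteq\F$. By Wiedemann's preconditioning analysis, $\tilde A$ has a squarefree characteristic polynomial with high probability; squarefreeness forces distinct eigenvalues, so $\tilde A$ is non-derogatory and its minimal polynomial equals its characteristic polynomial and has full degree $n$. I would then invoke Theorem~\ref{thm:minpoly} to certify the minimal polynomial of $\tilde A$. Because the preconditioner is the Verifier's random challenge (and not under the Prover's control), soundness of the minimal polynomial certificate forces the Prover to commit to the true minimal polynomial of $\tilde A$, which under squarefreeness is exactly $\chi_{\tilde A}$ of degree $n$. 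The Verifier thus obtains $\det(\tilde A)=(-1)^n$ times the constant coefficient of the certified polynomial.

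Next I would recover $\det(A)$ from $\det(\tilde A)$. Since $\det(\tilde A)=\det(D)\det(A)$ and the Verifier already holds the $n$ diagonal entries of $D$, he computes $\det(D)=\prod_i d_i$ in $\bigO{n}$ operations and sets $\det(A)=\det(\tilde A)/\det(D)$. Consequently the only extra communication beyond the minimal polynomial certificate is the preconditioner {\sc PreCondCyc} itself, the Prover's only extra work is forming and handling $\tilde A$, and the Verifier's only extra work is one product of $n$ scalars together with a single division. These are precisely the entries recorded for {\sc Det} in Table~\ref{tab:applis}, which reduce {\sc Det} to {\sc MinPoly} at essentially no overhead.

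The hard part is the probabilistic guarantee that {\sc PreCondCyc} actually makes $\chi_{\tilde A}$ squarefree, i.e. that $\tilde A$ is non-derogatory. This is the crux of Wiedemann's argument and the step I would cite rather than reprove: the discriminant of $\chi_{DA}(x)$, viewed as a polynomial in the diagonal entries $d_1,\dots,d_n$, is not identically zero, so by the Schwartz--Zippel bound the probability of a non-squarefree outcome over a random $D\in\Su^n$ is at most $\bigO{n^2/|\Su|}$. Over a sufficiently large field this failure probability is negligible and is simply folded into the overall soundness error of the determinant certificate; over small fields one first passes to a field extension so that a sampling set $\Su$ of adequate size is available before preconditioning. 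With this lemma in hand the reduction is complete.
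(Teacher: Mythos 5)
Your proposal is correct and follows essentially the same route as the paper: precondition $A$ by a random diagonal matrix so that the characteristic polynomial becomes squarefree (hence equals the minimal polynomial), certify that minimal polynomial via Theorem~\ref{thm:minpoly}, and recover $\det(A)$ from the constant coefficient after dividing out $\det(D)$. The paper's own proof is only a two-line sketch citing Wiedemann and Chen et al.; your version supplies the details (the constant-coefficient identity, the recovery of $\det(A)$, the Schwartz--Zippel bound, and the observation that the preconditioner must be the Verifier's challenge) but introduces no genuinely different idea.
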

\begin{proof}
We use the idea of~\cite[Theorem~2]{Wiedemann:1986:SSLE}: precondition the
initial matrix $A$ into a modified matrix $B$ whose characteristic polynomial is
square-free, and whose determinant is an easily computable modification of that
of $A$. For instance, such a {\sc PreCondCyc} preconditioner can be a diagonal matrix if the
field is sufficiently large~\cite[Theorem~4.2]{Chen:2002:EMP}
Precondition to get a square-free charpoly~\cite[Theorem~2]{Wiedemann:1986:SSLE}
and then certify the associated minpoly.
\end{proof}

\subsection{{\sc CharPoly}}
\begin{theorem}[\cite{jgd:2014:interactivecert}]\label{thm:charpoly}
  Certifying the characteristic polynomial can be reduced to the
  certification of the determinant.
\end{theorem}
\begin{proof}
The reduction is that of~\cite[Figure~1]{jgd:2014:interactivecert}:
the Prover computes the characteristic polynomial and sends it as a commitment
to the Verifier; 
then the Verifier gives a point $\lambda$ as challenge to the Prover which
responds with the determinant of $\lambda I_d-A$, and a certificate for that
determinant ($\lambda I_d-A$ remains sparse and costs no more than $\mu+n$ to be
applied to a vector).
Finally, the verifier simplify evaluates the commitment at $\lambda$ and checks
the equality with the certified determinant.
\end{proof}

\subsection{{\sc Det} over \Z}
Here the strategy is that of~\cite[\S 4.4]{jgd:2014:interactivecert}:
ask for {\sc MinPoly}, {\sc Det}, {\sc CharPoly} over~\Z.
After the commitment, the Verifier chooses a not so large prime, and ask for a
certificate of that same problem modulo the prime. 
Then the Verifier checks the certificate, and checks coherency with the integral
counterpart. 
On the one hand, the minimal and characteristic polynomial over {\sc \Z}
already occupy a quadratic space, so that taking modular images is already
quadratic.
On the other hand, for the determinant, this gives a linear time Verifier.

\bibliographystyle{plainurl} 
\bibliography{zkfhe} 
\end{document}